\providecommand{\Q}{\mathbb{Q}}
\newcommand{\eps}{\varepsilon}
\title{On Min-Power Steiner Tree\thanks{This research is supported by the ERC Starting Grant NEWNET 279352.}}
\author{Fabrizio Grandoni}
\institute{IDSIA, University of Lugano, Switzerland, {\tt fabrizio@idsia.ch}}
\begin{document}

\maketitle

\vspace{-20pt}

\begin{abstract}
\noindent In the classical (min-cost) Steiner tree problem, we are given an edge-weighted undirected graph and a set of terminal nodes. The goal is to compute a min-cost tree $S$ which spans all  terminals. In this paper we consider the min-power version of the problem (a.k.a. symmetric multicast), which is better suited for wireless applications. Here, the goal is to minimize the total power consumption of nodes, where the power of a node $v$ is the maximum cost of any edge of $S$ incident to $v$. Intuitively, nodes are antennas (part of which are terminals that we need to connect) and edge costs define the power to connect their endpoints via bidirectional links (so as to support protocols with ack messages). Observe that we do not require that edge costs reflect Euclidean distances between nodes: this way we can model obstacles, limited transmitting power, non-omnidirectional antennas etc. Differently from its min-cost counterpart, min-power Steiner tree is NP-hard even in the spanning tree case (a.k.a. symmetric connectivity), i.e. when all nodes are terminals. Since the power of any tree is within once and twice its cost, computing a $\rho_{st}\leq \ln(4)+\eps$ [Byrka et al.'10] approximate min-cost Steiner tree provides a $2\rho_{st}<2.78$ approximation for the problem. For min-power spanning tree the same approach provides a $2$ approximation, which was improved to $5/3+\eps$ with a non-trivial approach in [Althaus et al.'06].  

\quad  In this paper we present an improved approximation algorithm for min-power Steiner tree. Our result is based on two main ingredients. We present the first decomposition theorem for min-power Steiner tree, in the spirit of analogous structural results for min-cost Steiner tree and min-power spanning tree. Based on this theorem, we define a proper LP relaxation, that we exploit within the iterative randomized rounding framework in [Byrka et al.'10]. A careful analysis of the decrease of the power of nodes at each iteration provides a $3\ln 4-\frac{9}{4}+\eps<1.91$ approximation factor. The same approach gives an improved $1.5+\eps$ approximation for min-power spanning tree as well. This matches the approximation factor in [Nutov and Yaroshevitch'09] for the special case of min-power spanning tree with edge weights in $\{0,1\}$.

\end{abstract}

\vspace{-30pt}

\section{Introduction}
\label{sec:intro}


Consider the following basic problem in wireless network design. We are given a set of antennas, and we have to assign the transmitting power of each antenna. Two antennas can exchange messages (directly) if they are within the transmission range of each other (this models protocols with ack messages). The goal is to find a minimum total power assignment so that a given subset of antennas can communicate with each other (using a multi-hop protocol). 

We can formulate the above scenario as a \emph{min-power Steiner tree} problem (a.k.a. \emph{symmetric multicast}). Here we are given an undirected graph $G=(V,E)$, with edge costs $c:E\to \Q_{\geq 0}$, and a subset $R$ of terminal nodes. The goal is to compute a Steiner tree $S$ spanning $R$, of minimum \emph{power} $p(S):=\sum_{v\in V(S)}p_S(v)$, with $p_S(v):=\max_{uv\in E(S)}\{c(uv)\}$. In words, the power of a node $v$ with respect to tree $S$ is the largest cost of any edge of $S$ incident to $v$, and the power of $S$ is the sum of the powers of its nodes\footnote{When $S$ is clear from the context, we will simply write $p(v)$.}.
The {\em min-power spanning tree} problem (a.k.a. \emph{symmetric connectivity}) is the special case of min-power Steiner tree where $R=V$, i.e. all nodes are terminals. Let us remark that, differently from part of the literature on related topics, we do not require that edge costs reflect Euclidean distances between nodes. This way, we are able to model obstacles, limited transmitting power, antennas which are not omnidirectional, etc.

The following simple approximation-preserving reduction shows that min-power Steiner tree is at least as hard to approximate as its min-cost counterpart: given a min-cost Steiner tree instance, replace each edge $e$ with a path of $3$ edges, where the boundary edges have cost zero and the middle one has cost $c(e)/2$. Hence the best we can hope for in polynomial time is a $c$ approximation for some constant $c>1$. It is known \cite{ACMPTZ06,KKKP00} that, for any tree $S$ of cost $c(S):=\sum_{e\in E(S)}c(e)$,
\begin{equation}\label{eqn:2factor}
c(S)\leq p(S)\leq 2c(S).
\end{equation}
As a consequence, a $\rho_{st}$ approximation for min-cost Steiner tree implies a $2\rho_{st}$ approximation for min-power Steiner tree. In particular, the recent $\ln(4)+\eps<1.39$ approximation\footnote{Throughout this paper $\eps$ denotes a small positive constant.} in \cite{BGRS10} for the first problem, implies a $2.78$ approximation for the second one: no better approximation algorithm is known to the best of our knowledge. 

Differently from its min-cost version, min-power spanning tree is NP-hard (even in quite restricted subcases) \cite{ACMPTZ06,KKKP00}.
By the above argument, a min-cost spanning tree is a $2$ approximation. However, in this case non-trivial algorithms are known. A $1+\ln 2+\eps<1.69$ approximation is given in \cite{CMZ02}. This was improved to $\frac{5}{3}+\eps$ in \cite{ACMPTZ06}. If edge costs are either $0$ or $1$, the approximation factor can be further improved to $\frac{3}{2}+\eps$ \cite{NY09}. Indeed, the same factor can be achieved if edge costs are either $a$ or $b$, with $0\leq a<b$: this models nodes with two power states, low and high. All these results exploit the notion of $k$-decomposition. The first result is obtained with a greedy algorithm, while the latter two use (as a black box) the FPTAS in \cite{PS00} for the \emph{min-cost connected spanning hypergraph} problem in $3$-hypergraphs. We will also use $k$-decompositions, but our algorithms are rather different (in particular, they are LP-based).


\paragraph{\bf Our Results.} In this paper we present an improved approximation algorithm for min-power Steiner tree. 
\begin{theorem}\label{thr:mainSteiner}
There is an expected $3\ln 4-\frac{9}{4}+\eps<1.909$ approximation algorithm for min-power Steiner tree.
\end{theorem}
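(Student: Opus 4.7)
The plan is to combine a new $k$-decomposition theorem for min-power Steiner tree with the iterative randomized rounding framework of~\cite{BGRS10}, and to perform the entire analysis directly in the power metric rather than going through cost. First I would prove the decomposition theorem: for every fixed $k$, there exists a $k$-restricted Steiner tree (a union of full components each containing at most $k$ terminals) whose power is at most $(1+f(k))\,\mathrm{OPT}$ with $f(k)\to 0$. The construction mirrors the Borchers--Du argument used for min-cost Steiner tree and the $k$-decomposition results known for min-power spanning tree: start from an optimal tree, split it iteratively at carefully chosen Steiner nodes, and bound the resulting power overhead by switching between cost and power via~\eqref{eqn:2factor} at each splitting step.

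Second, using this decomposition I would set up a hypergraphic LP with one variable $x_C\in[0,1]$ per full component $C$ of size at most $k$, objective $\sum_C p(C)\,x_C$, and partition-type covering constraints exactly as in~\cite{BGRS10}. The decomposition theorem then guarantees that the LP optimum is at most $(1+\eps)\,\mathrm{OPT}$. Iterative randomized rounding proceeds as follows: at each step, solve the current LP, sample a single component $C$ with probability proportional to $x_C$, contract $C$, and iterate until only one super-terminal remains; the returned solution is the union of all sampled components.

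The heart of the proof is bounding the expected power of this union. Handling the sampled components as in~\cite{BGRS10} would give a factor near $\ln 4$ in cost and, via~\eqref{eqn:2factor}, at most $2\ln 4$ in power. The improvement to $3\ln 4-\tfrac{9}{4}$ must come from the observation that once a heavy edge incident to a node $v$ is purchased in some iteration, the residual LP no longer needs to pay for $v$'s full power in subsequent iterations. I would quantify this ``power-decrease'' by writing $p_S(v)=\int_0^\infty \mathbf{1}[\exists\,uv\in E(S):c(uv)\ge\tau]\,d\tau$ and, for each threshold $\tau$, relating the probability that the current iteration buys a $\tau$-heavy edge incident to $v$ to the LP fractional mass of $\tau$-heavy components at $v$. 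Integrating over $\tau$, summing over $v$, and balancing the direct sampling contribution against the amortized LP decrease is where the tight constant $3\ln 4-\tfrac{9}{4}$ should emerge.

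The main obstacle will be exactly this last step: because $p_S(v)$ is a maximum rather than a sum, the per-edge telescoping of the min-cost analysis does not transfer directly, and the threshold integration introduces a dependence across iterations, since the effect of one iteration on $v$'s power depends on which other components around $v$ are sampled earlier or later. Controlling this dependence — presumably by defining a ``power potential'' at each node, derived from the LP solution restricted to edges above threshold $\tau$, and proving that its expected decrease per iteration dominates the expected power paid in that iteration up to the desired constant — is where the delicate analysis lies, and is what turns the black-box $2\rho_{st}$ bound into the claimed $3\ln 4-\tfrac{9}{4}+\eps<1.909$ approximation.
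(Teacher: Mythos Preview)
Your high-level framework matches the paper's: decomposition theorem for power, a component-based LP with power coefficients, and iterative randomized rounding. But the proposal stops precisely where the actual work begins, and two concrete pieces are missing that the paper needs to reach $3\ln 4-\tfrac{9}{4}$.

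First, the Steiner case cannot be analyzed by the bridge-deletion lemma of~\cite{BGRS10} directly, because $S^*$ is not a terminal spanning tree. The paper handles this via a \emph{witness tree} $T^*$: it binarizes $S^*$ into $S^*_{bin}$ (carefully placing the most expensive edges at each node on consecutive top levels), randomly marks one child edge per internal node, and builds a terminal spanning tree $T^*$ from paths with exactly one marked edge. Each edge $f$ of $S^*$ then has a witness set $W(f)\subseteq E(T^*)$, and $f$ is declared deleted only once all of $W(f)$ is deleted from $T^*$. Your proposal has no analogue of this step, and without it your ``power potential'' idea has nothing to act on in the Steiner case.

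Second, the constant does not come from a threshold integral or an amortized LP-decrease argument; it comes from a discrete per-node quantity. For each node $v$ with incident edges $e^1_v,\dots,e^{d(v)}_v$ sorted by decreasing cost, let $\delta^i_v$ be the expected number of iterations until \emph{all} of $e^1_v,\dots,e^i_v$ are deleted. The paper shows $E[\sum_t p_{U^t}(v)]=\sum_i \delta^i_v(c^i_v-c^{i+1}_v)$ and then proves a uniform bound $\delta^i_v\le\delta^i$ by analyzing the distribution of $|W^i(v)|:=|\cup_{j\le i}W(e^j_v)|$. The crucial insight is that the witness sets $W(e^j_v)$ for edges at the \emph{same} node are highly correlated (they share the path through $v$ in $S^*_{bin}$), so $|W^i(v)|$ is typically close to $i$ rather than to $\sum_j|W(e^j_v)|$. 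This, combined with a heavy/middle/light classification of the edges of $S^*$ (according to whether an edge defines the power of two, one, or zero endpoints), collapses the whole bound to $(1+\eps)\delta^2/M$, and an explicit computation gives $\delta^2/M=3\ln 4-\tfrac{9}{4}$. Your proposal neither isolates $\delta^2$ as the governing quantity nor provides any mechanism to exploit the correlation of witness sets, so there is no route in it to the stated constant.

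A smaller but real issue: you say ``contract $C$'' after sampling. The paper deliberately does \emph{not} contract; it sets the sampled edges' costs to zero. Contraction can spuriously decrease node powers (merging endpoints kills a $\max$), which would invalidate the upper bound $p(S^t)\le p(U^t)$ you need.
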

Our result is based on two main ingredients. Informally, a $k$-decomposition of a Steiner tree $S$ is a collection of (possibly overlapping) subtrees of $S$, each one containing at most $k$ terminals, which together span $S$. The power/cost of a decomposition is the sum of the powers/costs of its components\footnote{Due to edge duplication, the cost of the decomposition can be larger than $c(S)$. Its power can be larger than $p(S)$ even for edge disjoints components.}. It is a well-known fact (see \cite{BD97} and references therein) that, for a constant $k$ large enough, there exists a $k$-decomposition of cost at most $(1+\eps)$ times the cost of $S$. A similar result holds for min-power spanning tree \cite{ACMPTZ06}. The first ingredient in our approximation algorithm is a similar decomposition theorem for min-power Steiner tree, which might be of independent interest. This extends the qualitative results in \cite{ACMPTZ06,BD97} since min-power Steiner tree generalizes the other two problems. However, the dependence between $\eps$ and $k$ is worse in our construction.
\begin{theorem}\label{thr:decomposition} {\bf (Decomposition)}
For any $h\geq 3$ and any Steiner tree $S$, there exists a $h^h$-decomposition of $S$ of power at most $(1+\frac{14}{h})p(S)$.
\end{theorem}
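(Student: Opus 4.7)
The plan is to prove the decomposition by induction on $|R(S)|$. For $|R(S)| \leq h^h$ the claim holds trivially with the single-piece decomposition $\{S\}$. For $|R(S)| > h^h$, I root $S$ at an arbitrary terminal and pick a deepest vertex $v$ with $|R(T_v)| > h^{h-1}$, where $T_v$ is the subtree of $S$ rooted at $v$. By the deepest choice each child $u$ of $v$ satisfies $|R(T_u)| \leq h^{h-1}$, so I greedily bundle the children-subtrees into groups $G_1,\ldots,G_s$ each with terminal count in $[h^{h-1}, 2h^{h-1}] \subseteq [1, h^h]$ (possibly one smaller leftover group). For each group I form a piece $T^{(\ell)}$ consisting of $v$, the children-subtrees in $G_\ell$, and the connecting edges; this piece has at most $h^h$ terminals. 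I then recurse on $S' = S \setminus (T_v \setminus \{v\})$, which has strictly fewer terminals, and output $\mathcal{D}' \cup \{T^{(1)}, \ldots, T^{(s)}\}$.

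A local accounting shows that the only power overhead introduced at this recursive step is at $v$ itself: $v$ appears in the $s$ new pieces with local power $\max_{u \in G_\ell} c(vu)$ each, and in $S'$ with local power equal to its parent-edge cost, whereas its original contribution to $p(S)$ was $p_S(v) = \max(c(\text{parent}), \max_u c(vu))$. A short calculation then bounds the overhead at $v$ by $X(v) := \sum_\ell \max_{u \in G_\ell} c(vu)$.

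The crux of the proof, and the main obstacle, is to show $\sum_i X(v_i) \leq (14/h)\, p(S)$ across all recursive cuts. The natural amortization charges each heaviest-edge cost $c(e_{i,\ell})$ (where $e_{i,\ell}$ realizes $\max_{u \in G_{i,\ell}} c(v_i u)$) against the internal power $p(T^{(\ell)}_i)$: since $e_{i,\ell}$ contributes to the power of both its endpoints, one has $c(e_{i,\ell}) \leq p(T^{(\ell)}_i)/2$; combined with the near-edge-disjointness $\sum_{i,\ell} c(T^{(\ell)}_i) \leq c(S)$ and $p \leq 2c$, this alone only yields an overhead of order $p(S)$, not $p(S)/h$. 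Sharpening to $14/h$ requires exploiting that each group contains $\geq h^{h-1} \gg h$ terminals, so the aggregate internal power of a group is inflated by a factor of order $h$ beyond its single heaviest incident edge at $v_i$. Formalizing this is the main technical hurdle: a careful case analysis distinguishing $s_i = 1$ (charge against the whole cut-off subtree $T_{v_i}$) from $s_i \geq 2$ (charge per group, exploiting that multiple children per group force multiple terminals with non-trivial powers), together with tracking the factor-of-$2$ slack in bundling, the $p \leq 2c$ gap, and the boundary terms at the parent edge, yields the claimed constant $14$.
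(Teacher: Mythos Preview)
Your proposal has a genuine gap at exactly the point you flag as ``the main technical hurdle.'' The hoped-for amortization $\sum_i X(v_i)\le (14/h)\,p(S)$ is false for the construction you describe. Consider a split node $v$ with $d$ children $u_1,\ldots,u_d$, where each edge $vu_j$ has cost $M$ and each subtree $T_{u_j}$ is a long path on $h^{h-1}$ terminals whose internal edges all have cost $\eps\ll M$. By the deepest-choice rule each $T_{u_j}$ is a full group, so $s=d$ and $X(v)=\sum_j c(vu_j)=dM$. But $p(S)=p(v)+\sum_j p(u_j)+O(dh^{h-1}\eps)=(d+1)M+o(M)$, so $X(v)/p(S)\to 1$ as $d\to\infty$. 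Your intuition that ``the aggregate internal power of a group is inflated by a factor of order $h$'' fails here: the internal power of each piece $T^{(\ell)}$ is essentially $2M$ (from $v$ and $u_\ell$ alone), independent of $h$ and of the number of terminals inside. Having many terminals in a group says nothing about the power of that group when the deep edges are cheap.

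The underlying issue is that recursive splitting at a node of unbounded degree can replicate that node's power an unbounded number of times. The paper handles this in two separate steps: first a degree-reduction lemma (Lemma~\ref{lem:boundedDegree}) caps the maximum degree at $h$ at a multiplicative power cost of roughly $1+O(1/h)$, using a dedicated charging scheme that exploits the ordering of children by edge cost and re-attaching via cheapest paths; only then does the paper decompose each bounded-degree component by a randomized level-shift argument (mark levels $q\bmod h$), where each node lands in at most $1+2/h$ components in expectation. Your inductive scheme collapses both steps into one and, without the degree bound, the charging cannot be rescued by the case analysis you sketch.
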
 
Based on this theorem, we are able to compute a $1+\eps$ approximate solution for a proper component-based LP-relaxation for the problem. We exploit this relaxation within the iterative randomized rounding algorithmic framework in \cite{BGRS10}: we sample one component with probability proportional to its fractional value, set the corresponding edge costs to zero and iterate until there exists a Steiner tree of power zero. The solution is given by the sampled components plus a subset of edges of cost zero in the original graph. A careful analysis of the decrease of node powers at each iteration provides a $3\ln 4-\frac{9}{4}+\eps<1.91$ approximation. We remark that, to the best of our knowledge, this is the only other known application of iterative randomized rounding to a natural problem. 

The same basic approach also provides an improved approximation for min-power spanning tree. 
\begin{theorem}\label{thr:mainSpanning}
There is an expected $\frac{3}{2}+\eps$ approximation algorithm for min-power spanning tree.
\end{theorem}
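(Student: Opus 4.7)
The plan is to mirror the algorithm used for Theorem~\ref{thr:mainSteiner}---decomposition, component-based LP, iterative randomized rounding---and then exploit the simpler structure of the spanning-tree case ($R=V$) to tighten the analysis from $3\ln 4-\frac{9}{4}$ to $\frac{3}{2}$.

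First, I would invoke Theorem~\ref{thr:decomposition} with $h$ a sufficiently large constant (chosen as $\Theta(1/\eps)$) so that an optimal spanning tree $T^*$ admits an $h^h$-decomposition of total power at most $(1+O(\eps))\,\mathrm{OPT}$. This decomposition drives a component-based LP relaxation entirely analogous to the one used for min-power Steiner tree: one variable $x_C$ per power-minimal tree $C$ on at most $h^h$ nodes, plus the fractional connectivity constraints in the spirit of \cite{BGRS10}. Since $h^h$ is a constant, the LP has polynomial size and can be solved to accuracy $(1+\eps)$ in polynomial time, and by the decomposition theorem its optimum value is at most $(1+O(\eps))\,\mathrm{OPT}$.

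Second, I would run iterative randomized rounding: solve the LP on the current graph, sample a single component $C$ with probability proportional to $x_C$ (rescaled by an appropriate step size $\tau$), contract $V(C)$ into a single super-node so that internal edges contribute no further power, and iterate until only one super-node remains; the output is the union of all sampled components.

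The crux, and the main obstacle, is producing a per-iteration potential-drop argument whose time-integral gives exactly $\frac{3}{2}$ rather than the $3\ln 4 - \frac{9}{4}$ obtained in the Steiner case. The source of the improvement is structural: when $R=V$, every node of every sampled component is a terminal, so contracting $C$ immediately retires the power contribution of $T^*$ at every absorbed node, yielding a two-sided charging (at both endpoints of every saturated edge of $T^*$) that was not available in the Steiner analysis, where non-terminal Steiner nodes could absorb credit. I would therefore set up a potential $\phi=\sum_v \phi(v)$ initialized to $p_{T^*}(v)$, show a differential inequality of the form $\mathbb{E}[\Delta\phi]\ge \alpha\cdot\mathbb{E}[\text{sampled power in this iteration}]$ for some constant $\alpha$ larger than what the Steiner case permits, and integrate to obtain a total expected sampled power of at most $(\frac{3}{2}+O(\eps))\,\mathrm{OPT}$. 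Routine bookkeeping---choosing $h$ large and the LP precision small as a function of $\eps$---absorbs the decomposition factor $1+\tfrac{14}{h}$ and the LP-solving slack into the final additive $\eps$, yielding the stated $\frac{3}{2}+\eps$ approximation.
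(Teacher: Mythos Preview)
Your high-level plan (decomposition $\to$ component LP $\to$ iterative randomized rounding) matches the paper, but the analysis you sketch has a genuine gap and misidentifies where the factor $\tfrac{3}{2}$ actually comes from.

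You propose a potential-drop inequality $\mathbb{E}[\Delta\phi]\ge \alpha\cdot\mathbb{E}[\text{sampled power}]$ for some constant $\alpha$ improved by ``two-sided charging at both endpoints of every saturated edge.'' But the paper's analysis already charges both endpoints of every edge of $S^*$ in \emph{both} the Steiner and the spanning case: this is exactly the heavy/middle/light edge classification that collapses the whole sum to the single quantity $\delta^2/M$ in~\eqref{eqn:refinedApx}. The improvement in the spanning case is \emph{not} a better charging constant $\alpha$; it is a smaller value of $\delta^2$.

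The concrete mechanism you are missing is this. When $R=V$, the optimum $S^*$ is itself a terminal spanning tree, so the bridge-deletion lemma of~\cite{BGRS10} (Lemma~\ref{lem:deletionSpanning} here) applies \emph{directly} to edges of $S^*$, with no witness-tree detour. Taking $W=\{e^1_v,\dots,e^i_v\}$ gives $\delta^i_v\le M H_i$, so one may set $\delta^i=M H_i$; these satisfy Properties~(a) and~(b), and plugging $\delta^2/M=H_2=\tfrac32$ into~\eqref{eqn:refinedApx} immediately yields $(1+\eps)\cdot\tfrac32$. In the Steiner case the witness-tree construction makes $|W^i(v)|$ potentially larger than $i$, inflating $\delta^2$ to $M(3\ln 4-\tfrac94)$. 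Your proposal never isolates this witness-tree/no-witness-tree distinction, and the unspecified differential inequality does not pin down any value that lands on $\tfrac32$.

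A smaller but non-cosmetic point: you contract $V(C)$ into a super-node, whereas the paper deliberately sets sampled edge costs to zero instead, precisely because contraction changes node powers and breaks the per-node accounting $p(S^t)\le p(U^t)=\sum_v p_{U^t}(v)$ that the bound~\eqref{eqn:refinedApx} rests on. You would need to either switch to the zero-cost variant or separately justify that contraction is compatible with the deletion-time analysis.
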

This improves on \cite{PS00}, and matches the approximation factor achieved in \cite{NY09} (with a drastically different approach!) for the special case of $0$-$1$ edge costs.



\paragraph{\bf Preliminaries and Related Work.} Min-power problems are well studied in the literature on wireless applications. Very often here one makes the assumption that nodes are points in $\mathbb{R}^2$ or $\mathbb{R}^3$, and that edge costs reflect the Euclidean distance $d$ between pairs of nodes, possibly according to some power law (i.e., the cost of the edge is $d^c$ for some constant $c$ typically between $2$ and $4$). This assumption is often not realistic for several reasons. First of all, due to obstacles, connecting in a direct way geographically closer nodes might be more expensive (or impossible). Second, the power of a given antenna might be upper bounded (or even lower bounded) for technological reasons. Third, antennas might not be omnidirectional. All these scenarios are captured by the undirected graph model that we consider in this paper. A relevant special case of the undirected graph model is obtained by assuming that there are only two edge costs $a$ and $b$, $0\leq a < b$. This captures the practically relevant case that each node has only two power states, low and high. 
A typical goal is to satisfy a given connectivity requirement at minimum total power, as we assume in this paper. However, it makes sense also to consider the min-max version of the problem, where one wants to minimize the maximum power. 

Several results are known in the \emph{asymmetric} case, where a unidirectional link is established from $u$ to $v$ iff $v$ is within the transmission range of $u$ (and possibly the vice versa does not hold). For example in the \emph{asymmetric unicast} problem one wants to compute a min-power directed path from node $s$ to node $t$. This problem can be solved in polynomial time, say, via dynamic programming. In the \emph{asymmetric connectivity} problem one wants to compute a min-power spanning arborescence rooted at a given root node $r$. This problem is NP-hard even in the $2$-dimensional Euclidean case 
\cite{CCPRV01}, and a minimum spanning tree provides a $12$ approximation for the Euclidean case (while the general case is log-hard to approximate) \cite{WCLF01}. The \emph{asymmetric multicast} problem is the generalization of asymmetric connectivity where one wants a min-power arborescence rooted at $r$ which contains a given set $R$ of terminals. As observed in \cite{ACMPTZ06}, the same approach as in \cite{WCLF01} provides a $12\rho_{st}$ approximation for the Euclidean case, where $\rho_{st}$ is the best-known approximation for Steiner tree in graphs.
In the \emph{complete range assignment} problem one wants to establish a strongly connected spanning subgraph. The authors of \cite{KKKP00} present a $2$-approximation which works for the undirected graph model, 
and show that the problem is NP-hard in the $3$-dimensional Euclidean case. The NP-hardness proof was extended to $2$ dimensions in \cite{CPS00}. Recently, the approximation factor was improved to $2-\delta$ for a small constant $\delta>0$ with a highly non-trivial approach \cite{C10}. The same paper presents a $1.61$ approximation for edge costs in $\{a,b\}$,  improving on the $9/5$ factor in \cite{CK07}.

In this paper we consider the symmetric case, where links must be bidirectional (i.e. $u$ and $v$ are not adjacent if one of the two is not able to reach the other). This is used to model protocols with ack messages. The symmetric unicast problem can be solved by applying Dijkstra's algorithm to a proper auxiliary graph \cite{ACMPTZ06}. The symmetric connectivity and multicast problems are equivalent to min-power spanning tree and min-power Steiner tree, respectively. We already discussed the known results on these problems. One can also consider higher connectivity requirements. For example, $O(k)$ \cite{HIM07} and $O(\log^4 n)$ \cite{HKMN07} approximation algorithms are known for the problem 
of computing a min-power $k$-vertex connected subgraph (with bidirectional links).


Proofs which are omitted due to lack of space are given in the appendix. The min-power Steiner tree is denoted by $S^*$.

\section{A Decomposition Theorem for Min-Power Steiner Tree}
\label{sec:decomposition}

A $k$-\emph{component} is a tree which contains at most $k$ terminals. If internal nodes are non-terminals, the component is \emph{full}. A $k$-\emph{decomposition} of a Steiner tree $S$ over terminals $R$ is a collection of $k$-components on the edges of $S$ which span $S$ and such that the following auxiliary \emph{component graph} is a tree: replace each component $C$ with a star, where the leaves are the terminals of $C$ and the central node is a distinct, dummy non-terminal $v_C$. Observe that, even if the component graph is a tree, the actual components might share edges. When the value of $k$ is irrelevant, we simply use the terms component and decomposition.
We will consider $k$-decompositions with $k=O(1)$. This is useful since a min-power component $C$ on a constant number of terminals can be computed in polynomial time\footnote{One can guess (by exhaustive enumeration) the non-terminal nodes of degree at least $3$ in $C$, and the structure of the tree where non-terminals of degree $2$ are contracted. Each edge $vu$ of the contracted tree corresponds to a path $P$ whose internal nodes are non-terminals of degree $2$: after guessing the boundary edges $uu'$ and $v'v$ of $P$ (which might affect the power of $u$ and $v$, respectively), the rest of $P$ is w.l.o.g. a min-power path between $u'$ and $v'$ (which can be computed in polynomial time \cite{ACMPTZ06}).}.
The assumption on the component graph is more technical, and it will be clearer later. Intuitively, when we compute a min-power component on a subset of terminals, we do not have full control on the internal structure of the component. For this reason, the connectivity requirements must be satisfied independently from that structure.


Assume w.l.o.g. that $S$ consists of one full component. This can be enforced by appending to each terminal $v$ a dummy node $v'$ with a dummy edge of cost $0$, and replacing $v$ with $v'$ in the set of terminals. Any decomposition into (full) $k$-components of the resulting tree can be turned into a $k$-decomposition of the same power for $S$ by contracting dummy edges, and vice versa.


Next lemma shows that one can assume that the maximum degree of the components in a decomposition can be upper bounded by a constant while losing a small factor in the approximation (see also Figure \ref{fig:lemDecomposition}). 
\begin{lemma}\label{lem:boundedDegree}
For any $\Delta\geq 3$, there exists a  decomposition of $S$ of power at most 
$(1+\frac{2}{\lceil\Delta/2\rceil-1})p(S)$ 
whose components have degree at most $\Delta$.
\end{lemma}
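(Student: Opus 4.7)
The plan is to reduce the degree of each offending non-terminal of $S$ one at a time by a local surgery that splits the node into several copies in separate components, amortizing the resulting power increase against $p(S)$. The key observation is that non-terminals are absent from the auxiliary component graph, so a non-terminal may be duplicated across many components at no structural cost; we only have to ensure that the new components share one of the (true) terminals so that the component graph remains a tree.

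Concretely, I would take any non-terminal $v$ with $d := \deg_S(v) > \Delta$ and let $u_1, \ldots, u_d$ be its neighbors in $S$. First, choose an \emph{anchor terminal} $t^\star$ together with its unique $v$-to-$t^\star$ path $P \subseteq S$ (taking $t^\star$ as close to $v$ as possible in hop count); say the first edge of $P$ from $v$ is $(v,u_1)$. Next, sort the remaining neighbors $u_2, \ldots, u_d$ by decreasing cost $c(v,u_i)$ and partition them into groups $G_1, \ldots, G_q$, each of size at most $\lceil\Delta/2\rceil - 1$. Finally, for each group $G_i$ form a new component $C_i$ consisting of the anchor path $P$, the edges $\{(v,u) : u \in G_i\}$, and the subtrees of $S \setminus \{v\}$ rooted at the $u \in G_i$; any high-degree non-terminals appearing inside those subtrees or inside $P$ would be handled recursively. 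Every $C_i$ contains $t^\star$, so in the component graph the new components share $t^\star$ and collectively form a substar at $t^\star$, preserving the tree structure. The degree of $v$ inside each $C_i$ is at most $1 + (\lceil\Delta/2\rceil - 1) = \lceil\Delta/2\rceil \leq \Delta$.

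For the power analysis, the excess power has two contributions: the $q$ copies of $v$ (beyond the one copy present in $S$) and the duplication of the nodes of $P$ across the $q$ components. The first source is controlled by the sorted grouping, since the sum of the maximum group costs telescopes and can be charged to the non-maximum edges at $v$; the group size $\lceil\Delta/2\rceil - 1$ is chosen precisely so that this charge amounts to a $\frac{1}{\lceil\Delta/2\rceil - 1}$ fraction of the contribution of $v$'s edges to $p(S)$. The second source is controlled by choosing $P$ short and charging the duplicated copies of each node $w \in P$ against its original power $p_S(w)$, which after amortization contributes the same fraction of $p(S)$. Summing the two contributions yields the factor $1 + \frac{2}{\lceil\Delta/2\rceil - 1}$.

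The main obstacle is the \emph{global} accounting: when the surgery is invoked iteratively at every offending non-terminal, the anchor paths of different surgeries can interact, and we must ensure that no edge of $S$ is over-charged in the amortization. I would handle this by processing non-terminals in DFS post-order, choosing each anchor terminal within a ``private'' region of the already-processed subtree so that the individual excess-power contributions are additive, and balancing group size against anchor length so they collectively sum to at most $\frac{2}{\lceil\Delta/2\rceil - 1}\cdot p(S)$, matching the claimed bound.
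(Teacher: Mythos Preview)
Your surgery at a single high-degree node $v$ does not work as stated, and the problem is not the global accounting but the local step itself. You propose to include the \emph{same} anchor path $P$ in every one of the $q$ new components. Since $q \approx (d-1)/(\lceil\Delta/2\rceil-1)$ and $d=\deg_S(v)$ is unbounded, each node $w$ on $P$ now appears in $q$ components and contributes roughly $q\cdot p_S(w)$ to the power of the decomposition. Your charging scheme offers only $\frac{1}{\lceil\Delta/2\rceil-1}\,p_S(w)$ of slack at $w$; there is simply nothing to pay for the remaining $(q-1)p_S(w)$. ``Choosing $P$ short'' does not help: hop count has no relation to power, and even a one-edge path $P=(v,t^\star)$ replicated $q$ times already costs $(q-1)\,c(v,t^\star)$ extra, which can exceed $p(S)$ by an arbitrary factor. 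A concrete bad instance: let $t^\star$ be a leaf adjacent to $v$ with $c(v,t^\star)=1$, let the remaining $d-1$ edges at $v$ have cost $\varepsilon$, and hang a tiny subtree below each; then $p(S)=O(1)$ but your decomposition has power $\Omega(d/\Delta)$.

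The paper's proof sidesteps this by never reusing a connecting path. It roots $S$, picks $v$ so that all its descendants already obey the degree bound, and forms the groups $V_1,\ldots,V_h$ among the \emph{children} of $v$ (sorted by increasing cost, with the root side kept in $C_h$). To keep the component graph a tree it chains $C_{i+1}$ to $C_i$, not by a shared anchor, but by appending to $C_{i+1}$ a single \emph{downward} path $P_{m(i)}$ from $v$ through some $u_j\in V_i$ to a leaf terminal of $C_i$; among the $\Delta'=\lceil\Delta/2\rceil$ candidate paths in $V_i$ it picks the one of minimum power $p(P_j)-c(vu_j)$ and charges that power to the nodes of the other $\Delta'-1$ candidate paths, which are pairwise disjoint and each at least as expensive. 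Because these candidate paths lie in subtrees that are removed from the root component after the split, every node is hit by this second charge at most once overall. Together with the first charge (power of the extra copies of $v$, paid by the next group $V_{i+1}$), each node is charged at most $\frac{2}{\Delta'-1}$ times its power, which is exactly the claimed bound. The idea you are missing is precisely this: use \emph{distinct} downward connecting paths chosen by a min-over-$\Delta'$ rule so that each path's cost can be amortized over $\Delta'-1$ disjoint witnesses.
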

\begin{proof}
The rough idea is to split $S$ at some node $v$ not satisfying the degree constraint, so that the duplicated copies of $v$ in each obtained component have degree (less than) $\Delta$. Then we add a few paths between components, so that the component graph remains a tree. All the components but one will satisfy the degree constraint: we iterate the process on the latter component. 

In more detail, choose any leaf node $r$ as a \emph{root}. The decomposition initially consists of $S$ only. We maintain the invariant that all the components but possibly the component $C_r$ containing $r$ have degree at most $\Delta$. Assume that $C_r$ has degree larger than $\Delta$ (otherwise, we are done). Consider any \emph{split node} $v$ of degree $d(v)=d+1\geq \Delta+1$ such that all its descendants have degree at most $\Delta$. Let $u_1,\ldots,u_{d}$ be the children of $v$, in increasing order of $c(vu_i)$. Define $\Delta':=\lceil \Delta/2\rceil\in [2,\Delta-1]$. We partition the $u_i$'s by iteratively removing the first $\Delta'$ children, until there are at most $\Delta-2$ children left: let $V_1,\ldots,V_h$ be the resulting subsets of children. In particular, for $i<h$, $V_i=\{u_{(i-1)\Delta'+1},\ldots,u_{i\Delta'} \}$. For $i=1,\ldots,h-1$, we let $C_i$ be a new component induced by $v$, $V_i$, and the descendants of $V_i$. The new root component $C_h$ is obtained by removing from $C_r$ the nodes $\cup_{i<h}V(C_i)-\{v\}$ and the corresponding edges. In order to 
maintain the connectivity of the component graph (which might be lost at this point), we expand $C_{i+1}$, $i\geq 1$, as follows: let $P_j$ be any path from $v$ to some leaf which starts with edge $vu_j$. We append to $C_{i+1}$ the path $P_{m(i)}$ which minimizes $p(P_j)-c(vu_j)$ over $j\in V_i$.
After this step, the component graph is a tree. The invariant is maintained: in fact, the degree of any node other than $v$ can only decrease. In each $C_i$, $i<h$, $v$ has degree either $\Delta'$ or $\Delta'+1$, which is within $2$ and $\Delta$. Since $\Delta-2-\Delta'< |V_h|\leq \Delta-2$, the cardinality $|V_h|+2$ of $v$ in $C_h$ is also in $[2,\Delta]$.
By the choice of $v$, all the components but $C_r$ have maximum degree $\Delta$. Observe that $C_r$ loses at least $\Delta'-1\geq 1$ nodes, hence the process halts.

In order to bound the power of the final decomposition, we use the following charging argument, consisting of two charging rules.
When we split $C_r$ at a given node $v$, the power of $v$ remains $p(v)$ in $C_h$ and becomes  $c(vu_{i\Delta'})$ in the other $C_i$'s. We evenly charge the extra power $c(vu_{i\Delta'})$ to nodes $V_{i+1}$: observe that each $u_j\in V_{i+1}$ is charged by $\frac{c(vu_{i\Delta'})}{|V_{i+1}|}\leq \frac{c(vu_{i\Delta'})}{\Delta'-1}\leq \frac{c(vu_{j})}{\Delta'-1}\leq \frac{p(u_{j})}{\Delta'-1}$.

Furthermore, we have an extra increase of the power by $p(P_{m(i)})-c(vu_{m(i)})$ for every $i<h$: this is charged to the nodes of the paths $P_j-\{v\}$ with $u_j\in V_{i}-\{u_{m(i)}\}$, in such a way that no node $w$ is charged by more than $\frac{1}{\Delta'-1}p(w)$. This is possible since there are $\Delta'-1$ such paths, and the nodes of each such path have total power at least $p(P_{m(i)})-c(vu_{m(i)})$ by construction.

Each node $w$ can be charged with the second charging rule at most once, since when this happens $w$ is removed from $C_r$ and not considered any longer. When $w$ is charged with the first charging rule, it must be a child of some split node $v$. Since no node is a split node more than once, also in this case we charge $w$ at most once. Altogether, each node $v$ is charged by at most $\frac{2}{\Delta'-1}p(v)$.\qed 
\end{proof}

\begin{proof} {\em (Theorem \ref{thr:decomposition})}
Apply Lemma \ref{lem:boundedDegree} with $\Delta=h$ to $S$, hence obtaining a decomposition  of power at most $\frac{\lceil h/2\rceil+1}{\lceil h/2\rceil-1}p(S)$ whose components have degree at most $h$. We describe an $h^h$ decomposition of each such component $C$ with more than $h^h$ terminals (see also Figure \ref{fig:thrDecomposition}). Root $C$ at any non-terminal $r$, and shortcut internal nodes (other than $r$) of degree $2$. For any internal node $v$ of $C$, let $P(v)$ be the path from $v$ to its rightmost child $r_v$, and then from $r_v$ to some leaf terminal $\ell(v)$ using the leftmost possible path. Observe that paths $P(v)$ are edge disjoint. Pick a value $q\in \{0,1,\ldots,h-1\}$ uniformly at random, and mark the nodes at level $\ell = q \pmod h$. Consider the partition of $C$ into edge-disjoint subtrees $T$ which is induced by the marked levels.
Finally, for each such subtree $T$, we append to each leaf $v$ of $T$ the path $P(v)$: this defines a component $C_T$.

Trees $T$ have at most $h^h$ leaves: hence components $C_T$ contain at most $h^h$ terminals each. Observe that the component graph remains a tree. In order to bound the power of components $C_T$, note that each node $u$ in the original tree has in each component a power not larger than the original power $p(u)$: hence it is sufficient to bound the expected number $\mu_{u}$ of components a node $u$ belongs to. 
Suppose $u$ is contracted or a leaf node. Then $u$ is contained in precisely the same components as some edge $e$. This edge belongs deterministically to one subtree $T$ (hence to $C_T$), and possibly to another component $C_{T'}$ if the node $v$ with $e\in P(v)$ is marked: the latter event happens with probability $1/h$. Hence in this case $\mu_u\leq 1+1/h$. For each other node $u$, observe that  $u$ belongs to one subtree $T$ if it is not marked, and to at most two such subtrees otherwise. Furthermore, it might belong to one extra component $C_{T'}$ if the node $v$ with $ul_u\in P(v)$ is marked, where $l_u$ is the leftmost child of $u$. Hence, $\mu_u\leq 1+2/h$ in this case. Altogether, the decomposition of $C$ has power at most $(1+2/h)p(C)$ in expectation. 

From the above discussion, there exists (deterministically) an $h^h$ decomposition of power at most $\frac{\lceil h/2\rceil+1}{\lceil h/2\rceil-1}(1+\frac{2}{h})p(S)\leq (1+\frac{14}{h})p(S)$.\qed 
\end{proof}
We remark that for both min-cost Steiner tree and min-power spanning tree (which are special cases of min-power Steiner tree), improved $(1+\frac{O(1)}{h})$-approximate $c^{h}$ decompositions, $c=O(1)$, are known \cite{ACMPTZ06,BD97}. Finding a similar result for min-power Steiner tree, if possible, is an interesting open problem in our opinion (even if it would not directly imply any improvement of our approximation factor).


\section{An Iterative Randomized Rounding Algorithm}
\label{sec:improved}

In this section we present an improved approximation algorithm for min-power Steiner tree. Our approach is highly indebted to \cite{BGRS10}. 
We consider the following LP relaxation for the problem:
{\small
\begin{align*}
\min & \sum_{(Q,s): s\in Q\subseteq R}p_{Q}\cdot x_{Q,s} & (LP_{pow}) \\
s.t.   & \sum_{\substack{(Q,s): s\in Q\subseteq R,\\ s\notin W,Q\cap W\neq \emptyset}}x_{Q,s}\geq 1, & \forall \emptyset\neq W\subseteq R-\{r\};\\
      & x_{Q,s}\geq 0, & \forall s\in Q\subseteq R.
\end{align*}
}
\hspace{-4pt}Here $r$ is an arbitrary \emph{root} terminal. There is a variable $x_{Q,s}$ for each subset of terminals $Q$ and for each $s\in Q$: the associated coefficient $p_Q$ is the power of a min-power component $C_{Q}$ on terminals $Q$. In particular, $S^*=C_{R}$ induces a feasible integral solution (where the only non-zero variable is $x_{R,r}=1$). Let 
$C_{Q,s}$ be the \emph{directed component} which is obtained by directing the edges of $C_{Q}$ towards $s$. For a fractional solution $x$, let us define a directed capacity reservation by
considering each $(Q,s)$, and increasing by  $x_{Q,s}$ the capacity of the edges in $C_{Q,s}$ 
Then the cut constraints ensure that each terminal is able to send one unit of (splittable) flow to the root without exceeding the mentioned capacity reservation. We remark that the authors of \cite{BGRS10} consider essentially the same LP, the main difference being that $p_{Q}$ is replaced by the cost $c_{Q}$ of a min-cost component on terminals $Q$\footnote{Another technical difference w.r.t. \cite{BGRS10} is that they consider only full components: this has no substantial impact on their analysis, and allows us to address the Steiner and spanning tree cases in a unified way.}. In particular, the set of constraints in their LP is the same as in $LP_{pow}$. This allows us to reuse part of their results and techniques, which rely only on the properties of the set of constraints\footnote{Incidentally, this observation might be used to address also other variants of the Steiner tree problem, with different objective functions.}. 

Given Theorem \ref{thr:decomposition}, the proof of the following lemma follows along the same line as in \cite{BGRS10}.
\begin{lemma}\label{lem:polyLP}
For any constant $\eps>0$, a $1+\eps$ approximate solution to $LP_{pow}$ can be computed in polynomial time.
\end{lemma}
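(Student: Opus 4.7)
The plan is to restrict $LP_{pow}$ to components of bounded size, solve the restricted LP in polynomial time via the ellipsoid method, and then bound the loss in LP value using Theorem~\ref{thr:decomposition}.

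First I would fix $h := \lceil 14/\eps \rceil$ and $k := h^h$, and let $LP_{pow}^{(k)}$ denote the LP obtained from $LP_{pow}$ by keeping only the variables $x_{Q,s}$ with $|Q| \le k$. Since $k$ is constant, $LP_{pow}^{(k)}$ has $n^{O(k)}$ variables, and each coefficient $p_Q$ can be computed in polynomial time (by the footnote in Section~\ref{sec:decomposition}). The cut constraints admit a polynomial-time separation oracle: introduce an auxiliary directed graph with a gadget-node $u_{Q,s}$ per variable, with infinite-capacity arcs from every $v \in Q \setminus \{s\}$ to $u_{Q,s}$ and a capacity-$x_{Q,s}$ arc from $u_{Q,s}$ to $s$; a cut is violated iff some terminal cannot route a unit of flow to $r$ in this auxiliary graph. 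Hence $LP_{pow}^{(k)}$ is solvable in polynomial time.

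The main step is to show $OPT(LP_{pow}^{(k)}) \le (1+\eps)\, OPT(LP_{pow})$. I would do this by transforming an optimal solution $x^*$ of $LP_{pow}$ into a feasible $\tilde{x}$ for $LP_{pow}^{(k)}$ of cost at most $(1+\eps)\, obj(x^*)$. For each variable $x^*_{Q,s}$ with $|Q| > k$, apply Theorem~\ref{thr:decomposition} to the min-power component $C_Q$ with parameter $h$: this yields a $k$-decomposition $C'_1, \ldots, C'_m$ with terminal sets $Q_1, \ldots, Q_m$ satisfying $\sum_i p(C'_i) \le (1 + 14/h)\, p_Q \le (1+\eps)\, p_Q$. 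Since the component graph of the decomposition is a tree and contains $s$, for each $i$ I would define $s_i \in Q_i$ to be the first terminal encountered along the unique path in the component graph from the dummy center of $C'_i$ to $s$. Then add $x^*_{Q,s}$ to $\tilde{x}_{Q_i, s_i}$ for every $i$ and set $\tilde{x}_{Q,s} := 0$; variables with $|Q| \le k$ remain untouched. Since $p_{Q_i} \le p(C'_i)$ by minimality, the contribution of these new variables to the objective is at most $\sum_i p(C'_i)\, x^*_{Q,s} \le (1+\eps)\, p_Q\, x^*_{Q,s}$, yielding the desired cost bound.

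Feasibility of $\tilde{x}$ reduces to showing: for every cut $W \subseteq R - \{r\}$ and every $(Q,s)$ contributing to the cut constraint in $x^*$ (i.e., with $s \notin W$ and $Q \cap W \ne \emptyset$), there exists at least one $i$ with $s_i \notin W$ and $Q_i \cap W \ne \emptyset$. I would prove this by tracing the unique component-graph path from an arbitrary $v \in Q \cap W$ to $s$: this path alternates between terminals and dummy centers; since $v \in W$ and $s \notin W$, there is some component $C'_i$ along the path where the path crosses from a terminal $u' \in Q_i \cap W$ to a terminal $u'' \in Q_i \setminus W$. By the definition of $s_i$, we have $s_i = u'' \notin W$, while $u' \in Q_i \cap W$ witnesses $Q_i \cap W \ne \emptyset$.

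The hard step is the feasibility argument, which depends crucially on the component graph of Theorem~\ref{thr:decomposition} being a tree: without this tree structure, the routing directions $s_i$ could not be chosen consistently to preserve all cut constraints. Modulo this observation, the proof mirrors the analogous result in~\cite{BGRS10}, with $p_Q$ replacing the component cost $c_Q$ throughout.
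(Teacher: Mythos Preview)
Your proof is correct and follows essentially the same approach as the paper: restrict to components of bounded size, transform an optimal fractional solution of $LP_{pow}$ into a feasible solution of the restricted LP via the Decomposition Theorem (orienting the component graph towards $s$ to define the sinks $s_i$), and separate the cut constraints by a min-cut computation. You supply more detail than the paper does on the feasibility argument and on the separation oracle, but the underlying structure of the argument is the same.
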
 


We exploit $LP_{pow}$ within the iterative randomized rounding framework in \cite{BGRS10}. Our algorithm (see also Algorithm \ref{alg:mpst}) consists of a set of iterations. At each iteration $t$ we compute a $(1+\eps)$-approximate solution to $LP_{pow}$, and then sample one component $C^t=C_{Q}$ with probability proportional to $\sum_{s\in Q}x^t_{Q,s}$. We set to zero the cost of the edges of $C^t$ in the graph, updating $LP_{pow}$ consequently\footnote{In the original algorithm in \cite{BGRS10}, the authors contract components rather than setting to zero the cost of their edges. Our variant has no substantial impact on their analysis, but it is crucial for us since contracting one edge (even if it has cost zero) can decrease the power of the solution.}. The algorithm halts when there exists a Steiner tree of cost (and power) zero: this halting condition can be checked in polynomial time. 
\begin{algorithm}
\vspace{5pt}
\begin{enumerate}
\item[(1)] For $t=1,2,\ldots$ 
\item[{}] \begin{enumerate}
  \item[(1a)] Compute a $1+\eps$ approximate solution $x^t$ to $LP_{pow}$ (w.r.t. the current instance).
  \item[(1b)] Sample one component $C^t$, where $C^t=C_{Q}$ with probability $\sum_{s\in Q}x^t_{Q,s}/\sum_{(Q',s')}x^t_{Q',s'}$. Set to zero the cost of the edges in $C^t$ and update $LP_{pow}$.
  \item[(1c)] If there exists a Steiner tree of power zero, return it and halt. 
\end{enumerate}
\end{enumerate}
\caption{An iterative randomized rounding approximation algorithm for min-power Steiner tree.}\label{alg:mpst} 
\end{algorithm}

\begin{lemma}\label{lem:time}
Algorithm \ref{alg:mpst} halts in a polynomial number of rounds in expectation.
\end{lemma}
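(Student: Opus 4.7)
I would prove this by adapting the termination argument of~\cite{BGRS10} for iterative randomized rounding: Algorithm~\ref{alg:mpst} uses the same cut-constraint structure and the same sampling rule as their algorithm, and differs only in the objective coefficients $p_Q$. Since the termination proof depends on how the sampling distribution interacts with the cut constraints, not on the objective, the argument transports with minor changes.

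Concretely, I would track progress via the terminal-partition induced by the current graph of zero-cost edges: let $\kappa^t$ denote the number of classes at iteration $t$. The halting condition ``there exists a Steiner tree of power zero'' is equivalent to $\kappa^t=1$. We have $\kappa^1\leq |R|$, and $\kappa^t$ is monotonically non-increasing because edges are only ever zeroed out. The lemma reduces to showing that, whenever $\kappa^t\geq 2$, the conditional probability of a strict decrease $\kappa^{t+1}<\kappa^t$ is at least $1/\mathrm{poly}(|R|)$, because then a standard waiting-time/Wald argument bounds the expected number of iterations by $(|R|-1)\cdot\mathrm{poly}(|R|)$.

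To establish the per-iteration probability bound, let $W$ be the terminals of any current class not containing the root $r$. The cut constraint of $LP_{pow}$ for this $W$ forces the fractional mass on $(Q,s)$ with $s\notin W$ and $Q\cap W\ne\emptyset$ to be at least $1$. Any sampled component $C^t=C_Q$ of this type must contain terminals both in $W$ and outside $W$; setting all its edges to zero then merges the classes it visits, so $\kappa^{t+1}<\kappa^t$. Hence
\[
\Pr[\kappa^{t+1}<\kappa^t\mid x^t]\ \geq\ \frac{1}{M^t},\qquad \text{where }M^t:=\sum_{(Q,s)}x^t_{Q,s}.
\]
It then suffices to bound $M^t$ polynomially in $|R|$. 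This I would do via the flow/capacity interpretation of $LP_{pow}$ described right after its definition: each non-root terminal must route one unit of flow to $r$ through the directed capacity reservation, and in an optimal (or near-optimal) solution one may assume the single-terminal cut constraints are essentially tight, which via the identity $\sum_{(Q,s)}x^t_{Q,s}(|Q|-1)=|R|-1$ (and $|Q|\geq 2$ for any useful variable) gives $M^t=O(|R|)$, up to the $(1+\eps)$ slack coming from Lemma~\ref{lem:polyLP}.

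The main technical obstacle will be this last step, precisely because Algorithm~\ref{alg:mpst} sets edge costs to zero rather than contracting as in~\cite{BGRS10}: in the contraction variant $|R|$ itself shrinks, which directly facilitates a mass bound, whereas here $|R|$ is preserved. On the other hand, the constraint side of $LP_{pow}$ is completely untouched by edge-cost updates, so the flow argument applies to every iteration's LP identically, and the $O(|R|)$ bound on $M^t$ and hence the $\Omega(1/|R|)$ merge probability carry through, yielding an expected $O(|R|^2)$ number of iterations.
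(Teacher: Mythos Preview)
Your high-level strategy---use a cut constraint to lower-bound the probability that a sampled component makes progress by $1/M^t$, then bound $M^t$ polynomially---is the same as the paper's. The gap is in your bound $M^t=O(|R|)$.

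The identity $\sum_{(Q,s)} x^t_{Q,s}(|Q|-1)=|R|-1$ holds for \emph{integral} decompositions whose component graph is a tree, but there is no reason it should hold for an arbitrary (near\nobreakdash-)optimal fractional solution of $LP_{pow}$. Near-optimality controls the objective value $\sum p_Q x_{Q,s}$, not the total mass $\sum x_{Q,s}$: once edge costs have been zeroed out in earlier rounds, many components $C_Q$ have $p_Q=0$, and a $(1+\eps)$-approximate solver is free to set their variables anywhere in $[0,1]$ without affecting the objective. Single-terminal cuts need not be tight, and the $(1+\eps)$ slack from Lemma~\ref{lem:polyLP} says nothing about $M^t$. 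So the step ``$M^t=O(|R|)$'' is not justified as written, and the contraction-versus-zeroing discussion in your last paragraph does not address this.

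The paper sidesteps the issue with a much cruder but immediate bound: by Lemma~\ref{lem:polyLP} one may restrict to pairs $(Q,s)$ with $|Q|\le k$ for a fixed constant $k$, so there are only $n^{O(1)}$ variables, and w.l.o.g.\ $x^t_{Q,s}\le 1$, whence $M^t\le n^{O(1)}$. The paper also uses a different progress measure---the number of edges of positive cost rather than your number of terminal classes---so a ``useful'' sample is one that zeroes at least one more edge; but this distinction is immaterial for the conclusion. Plugging $M^t\le n^{O(1)}$ into your argument in place of $O(|R|)$ already repairs it and gives a polynomial expected number of rounds.
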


\section{An Improved Approximation.}

In this section we bound the approximation factor of Algorithm \ref{alg:mpst}, 
both in the general and in the spanning tree case.
Following \cite{BGRS10}, in order to simplify the analysis let us consider the following variant of the algorithm. We introduce a dummy variable $x_{r,r}$ with $p_{r}=0$ corresponding to a dummy component containing the root only, and fix $x_{r,r}$ so that the sum of the $x$'s is some fixed value $M$ in all the iterations. For $M=n^{O(1)}$ large enough, this has no impact on the power of the solution nor on the behaviour of the algorithm (since sampling the dummy component has no effect). Furthermore, we let the algorithm run forever (at some point it will always sample components of power zero).

Let $S^t$ be the min-power Steiner tree at the beginning of iteration $t$ (in particular, $S^1=S^*$). For a given sampled component $C^t$, we let $p(C^t)$ be its power in the considered iteration. We define similarly $p(S^t)$ and the corresponding cost $c(S^t)$. The expected approximation factor of the algorithm is bounded by:
{\small
\begin{align}
\frac{1}{p(S^*)}\sum_{t}E[p(C^t)]  & =\frac{1}{p(S^*)}\sum_{t}\sum_{(Q,s)}E[\frac{x^t_{Q,s}}{M}p_{Q}]\leq \frac{1+\eps}{Mp(S^*)}\sum_{t}E[p(S^t)].\label{eqn:basicApx}
\end{align}
}
\hspace{-4pt}Hence, it is sufficient to provide a good upper bound on $E[p(S^t)]$. We exploit the following high-level (ideal) procedure. We start from $\tilde{S}=S^*$, and at each iteration $t$ we add the sampled component $C^t$ to $\tilde{S}$ and delete some \emph{bridge} edges $B^t$ in $E(\tilde{S})\cap E(S^*)$ in order to remove cycles (while maintaining terminal connectivity). By construction, $\tilde{S}$ is a feasible Steiner tree at any time. Furthermore, the power of $\tilde{S}$ at the beginning of iteration $t$ is equal to the power $p(U^t)$ of the forest of non-deleted edges $U^t$ of $S^*$ at the beginning of the same iteration\footnote{Since edge weights of sampled components are set to zero, any bridge edge can be replaced by a path of zero cost edges which provides the same connectivity.}. In particular, $p(S^t)\leq p(U^t)=\sum_v p_{U^t}(v)$.   

At this point our analysis deviates (and gets slightly more involved) w.r.t. \cite{BGRS10}: in that paper the authors study the expected number of iterations before a given (single) edge is deleted. We rather need to study the behavior of collections of edges incident to a given node $v$. 
In more detail, let $e^1_v,\ldots,e^{d(v)}_v$ be the edges of $S^*$ incident to $v$, in decreasing order of cost  $c^1_v\geq c^2_v\geq,\ldots,\geq c^{d(v)}_v$ (breaking ties arbitrarily). Observe that $p_{U^t}(v)=c^i_v$ during the iterations when all edges $e^1_v,\ldots,e^{i-1}_v$ are deleted and $e^i_v$ is still non-deleted. Define $\delta^i_v$ as the expected number of iterations before all edges $e^1_v,\ldots,e^{i}_v$ are deleted. For notational convenience, define also $\delta^0_v=c^{d(v)+1}_v=0$. Then
{\small
\begin{equation}\label{eqn:newBound}
E[\sum_t p_{U^t}(v)]= \sum_{i=1}^{d(v)}c^i_v(\delta^i_v-\delta^{i-1}_v)
= \sum_{i=1}^{d(v)}\delta^i_v(c^i_v-c^{i+1}_v).
\end{equation}}
\hspace{-4pt}We will provide a feasible upper bound $\delta^i$ on $\delta^i_v$ for all $v$ (for a proper choice of the bridge edges $B^t$) with the following two properties for all $i$:
{\small
$$
\mathbf{(a)}\; \delta^i\leq \delta^{i+1} \hspace{2cm } \mathbf{(b)}\; \delta^{i}-\delta^{i-1}\geq \delta^{i+1}-\delta^i.
$$}
\hspace{-3pt}In words, the $\delta^i$'s are increasing (which is intuitive since one considers larger sets of edges) but at decreasing speed.
Consequently, from \eqref{eqn:newBound} one obtains
{\small
\begin{equation}\label{eqn:newBound2}
E[\sum_t p_{U^t}(v)]\leq \delta^1 c^1_v + \max_{i\geq 2}\{\delta^i-\delta^{i-1}\} \sum_{i=2}^{d(v)}c^i_v = \delta^1 c^1_v + (\delta^2-\delta^1) \sum_{i=2}^{d(v)}c^i_v. 
\end{equation}}
\hspace{-3pt}Inspired by \eqref{eqn:newBound2}, we introduce the following classification of the edges of $S^*$. We say that the power of node $v$ is \emph{defined} by $e^1_v$. We partition the edges of $S^*$ into the \emph{heavy} edges $H$ which define the power of both their endpoints, the \emph{middle} edges $M$ which define the power of exactly one endpoint, and the remaining \emph{light} edges $L$ which do not define the power of any node. Let $c(H)=\gamma_H\,c(S^*)$ and $c(M)=\gamma_M\,c(S^*)$. Observe that $p(S^*)=\alpha\,c(S^*)$ where $\alpha=2\gamma_H+\gamma_M\in [1,2]$. 
Note also that in \eqref{eqn:newBound2} heavy edges appear twice with coefficient $\delta^1$, middle edges appear once with coefficient $\delta^1$ and once with coefficient $\delta^2-\delta^1$, and light edges appear twice with coefficient $\delta^2-\delta^1$. Therefore one obtains
{\small
\begin{align}
E[\sum_t p(U^t)] & =\sum_v E[\sum_t p^t(v)]  \leq 2\delta^1 c(H) + (\delta^1+\delta^2-\delta^1)c(M)+2(\delta^2-\delta^1)c(L) \notag \\  & = (2\delta^1 \gamma_H +\delta^2 \gamma_M +2(\delta^2-\delta^1)(1-\gamma_H-\gamma_M))\cdot c(S^*) \notag \\ & = \left(2(\delta^2-\delta^1)+(2\delta^1-\delta^2)\alpha\right)\cdot \frac{p(S^*)}{\alpha}\overset{\alpha\geq 1}{\leq} \delta^2\,p(S^*). \label{eqn:newBound3}
\end{align}}
\hspace{-3pt}Summarizing the above discussion, the approximation factor of the algorithm can be bounded by
{\small
\begin{align}
\frac{1+\eps}{Mp(S^*)}\sum_{t}E[p(S^t)]\leq \frac{1+\eps}{Mp(S^*)}\sum_{t}E[p(U^t)]\overset{\eqref{eqn:newBound3}}{\leq}\frac{(1+\eps)\delta^2}{M}.\label{eqn:refinedApx}
\end{align}
}

We next provide the mentioned bounds $\delta^i$ satisfying Properties (a) and (b): we start with the spanning tree case and then move to the more complex and technical general case.


\subsection{The Spanning Tree Case.} 

Observe that in this case the optimal solution $T^*:=S^*$ is by definition a terminal spanning tree (i.e. a Steiner tree without Steiner nodes). Therefore we can directly exploit the following claim in \cite{BGRS10}.

\begin{lemma}\cite{BGRS10}\label{lem:deletionSpanning}
Let $T^*$ be any terminal Steiner tree. Set $\tilde{T}:=T^*$ and consider the following process. For $t=1,2,\ldots$: {\bf (a)} Take any feasible solution $x^t$ to $LP_{pow}$; {\bf (b)} Sample one component $C^t=C_{Q}$ with probability proportional to variables $x^t_{Q,s}$; {\bf (c)} Delete a subset of bridge edges $B^t$ from $E(\tilde{T})\cap E(T^*)$ so that all the terminals remain connected in $\tilde{T}-B^t\cup C^t$.  There exists a randomized procedure to choose the $B^t$'s so that
any $W\subseteq E(T^*)$ is deleted after $M\, H_{|W|}$ iterations in expectation\footnote{$H_q:=\sum_{i=1}^{q}\frac{1}{i}$ is the $q$-th harmonic number.}.
\end{lemma}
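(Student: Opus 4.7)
The plan is to follow the iterative randomized rounding analysis of \cite{BGRS10}, whose core is a carefully designed randomized bridge-selection rule together with a coupon-collector-style bound on deletion times. The first step is to root $T^*$ at the root terminal $r$, and for each edge $e\in E(T^*)$ let $W_e\subseteq R$ denote the set of terminals lying in the subtree of $T^*$ hanging below $e$ (away from $r$). The cut constraint of $LP_{pow}$ applied to $W_e$ gives
\[
\sum_{(Q,s):\, s\notin W_e,\, Q\cap W_e\neq\emptyset} x^t_{Q,s} \;\geq\; 1,
\]
so the probability of sampling in iteration $t$ a component that ``crosses'' $e$ is at least $1/M$ (since sampling is proportional to the $x^t_{Q,s}/M$).

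Next I would specify the bridge-selection procedure $B^t$. When the sampled component is $C^t=C_{Q,s}$, orient its edges toward $s$; then each terminal $q\in Q\setminus\{s\}$ routes one unit of flow to $s$ inside $C^t$. Adding $C^t$ to $\tilde T$ creates a collection of fundamental cycles (one per edge of $C^t$), each of which must be broken by removing some edge of $E(\tilde T)\cap E(T^*)$. I would pick these bridges via the matroidal/witness mechanism of \cite{BGRS10}: take a uniformly random permutation of the edges of $C^t$ and greedily delete, for each such edge, a random bridge chosen among those that lie on its fundamental cycle and whose removal still preserves terminal-connectivity. This gives a valid choice of $B^t\subseteq E(\tilde T)\cap E(T^*)$ and, by the exchange property of graphic matroids, preserves terminal-connectivity of $\tilde T-B^t\cup C^t$.

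The crux is to prove the following per-iteration lower bound: for any fixed $W\subseteq E(T^*)$, if at the start of iteration $t$ the subset $W'\subseteq W$ of as-yet-undeleted edges has size $k'$, then
\[
\Pr[\,|W'\cap B^t|\geq 1\,]\;\geq\;\frac{k'}{M}.
\]
To get this, I would apply the LP cut constraint simultaneously to all the sets $W_e$ with $e\in W'$: summing gives a total fractional crossing mass of at least $k'$. A Hall/flow-type argument (as in \cite{BGRS10}) then produces, conditionally on the component $C^t$ being sampled, a fractional matching between the $W_e$'s and the bridges the procedure can delete, ensuring that the expected number of edges of $W'$ hit by $B^t$ is at least $k'/M$. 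The main obstacle is precisely this step, because one must rule out that a single sampled component is ``wasted'' on deleting multiple edges of $W'$ in one shot while leaving others untouched; this is where the matroidal witness selection and the random permutation over cycle-edges are essential.

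Given the bound $\Pr[|W'\cap B^t|\geq 1\mid |W'|=k']\geq k'/M$, the number of iterations between successive deletions in $W$ is stochastically dominated by a geometric random variable of success probability $k'/M$, so the expected time to shrink $W'$ from $k$ down to $0$ is at most
\[
\sum_{k'=1}^{|W|} \frac{M}{k'} \;=\; M\cdot H_{|W|},
\]
which yields the claimed bound and completes the proof. \qed
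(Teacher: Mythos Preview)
The paper does not prove this lemma at all: it is quoted verbatim from \cite{BGRS10} and used as a black box, so there is no ``paper's own proof'' to compare against. What you have written is therefore a sketch of the \cite{BGRS10} argument rather than of anything in the present paper.

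As a sketch of \cite{BGRS10}, your outline has the right shape (LP cut constraints give per-edge crossing mass $\geq 1$, a randomized bridge rule turns this into a per-edge deletion probability $\geq 1/M$, and a coupon-collector sum yields $M\,H_{|W|}$), but the step you yourself flag as ``the main obstacle'' is not actually resolved. You assert $\Pr[\,|W'\cap B^t|\geq 1\,]\geq k'/M$ and then justify only $E[\,|W'\cap B^t|\,]\geq k'/M$; the latter does \emph{not} imply the former, precisely because one sampled component may delete several edges of $W'$ at once. Your proposed bridge rule (uniformly permute the edges of $C^t$ and greedily break fundamental cycles) is also not the construction in \cite{BGRS10}; their rule is tied to the directed structure of $C_{Q,s}$ and to a carefully designed random mapping from component edges to tree edges that guarantees the per-edge marginal $\geq 1/M$ conditionally on the entire history. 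Without pinning down that construction and the accompanying argument, the coupon-collector conclusion does not follow. In short: the plan is right, but the crux is exactly the part you left as a pointer to \cite{BGRS10}.
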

By Lemma \ref{lem:deletionSpanning} with $W=\{e^1_v,\ldots,e^i_v\}$, we can choose $\delta^i= M\cdot H_i$. 
Observe that these $\delta^i$'s satisfy Properties (a) and (b) since $\frac{\delta^{i+1}-\delta^{i}}{M}=\frac{1}{i+1}$ is a positive decreasing function of $i$.
Theorem \ref{thr:mainSpanning} immediately follows by \eqref{eqn:refinedApx} since $\frac{(1+\eps)\delta^2}{M}=\frac{(1+\eps)M\,H_2}{M}=(1+\eps)\cdot \frac{3}{2}$.


\subsection{The General Case.}

Here we cannot directly apply Lemma \ref{lem:deletionSpanning} since $S^*$ might not be a terminal spanning tree: w.l.o.g. assume that $S^*$ consists of one full component. Following \cite{BGRS10}, we define a proper auxiliary terminal spanning tree $T^*$, the \emph{witness tree} (see also Figure \ref{fig:witnesstree}). We turn $S^*$ into a rooted binary tree  $S^*_{bin}$ as follows: Split one edge, and root the tree at the newly created node $r$. Split internal nodes of degree larger than $3$ by introducing dummy nodes and dummy edges of cost zero. We make the extra assumption\footnote{This is irrelevant for \cite{BGRS10}, but it is useful in the proof of Lemma \ref{lem:technical}.}, that we perform the latter step so that the $i$ most expensive edges incident to a given node appear in the highest possible (consecutive) levels of $S^*_{bin}$. Finally, shortcut internal nodes of degree $2$.
Tree $T^*$ is constructed as follows. For each internal node $v$ in $S^*_{bin}$ with children $u$ and $z$, mark uniformly at random exactly one of the two edges $vu$ and $vz$. Given two terminals $r'$ and $r''$, add $r'r''$ to $T^*$ iff the path between $r'$ and $r''$ in $S^*_{bin}$ contains exactly one marked edge. We associate to each edge $f'\in E(S^*_{bin})$ a (non-empty) \emph{witness set} $W(f')$ of edges of  $T^*$ as follows: $e=uv\in E(T^*)$ belongs to $W(f')$ iff the path between $u$ and $v$ in $S^*_{bin}$ contains $f'$. There is a many-to-one correspondence from each $f\in E(S^*)$ to some $f'\in E(S^*_{bin})$: we let $W(f):=W(f')$. 

We next apply the same deletion procedure as in Lemma \ref{lem:deletionSpanning} to $T^*$. When all the edges in $W(f)$ are deleted, we remove $f$ from $S^*$: this process defines the bridge edges $B^t$ that we remove from $\tilde{S}$ at any iteration.
As shown in \cite{BGRS10}, the non-deleted edges $U^t$ of $S^*$ at the beginning of iteration $t$ plus  the components which are sampled in the previous iterations induce (deterministically) a feasible Steiner tree. Hence also in this case we can exploit the upper bound $p(S^t)\leq p(U^t)=\sum_v p_{U^t}(v)$. 
Let us define $W^i(v):=\cup_{j=1}^{i}W(e^j_v)$. In particular, in order to delete all the edges $e^1_v,\ldots,e^i_v$ we need to delete $W^i(v)$ from $T^*$. 
The next technical lemma provides a bound on $\delta^i_v$ by combining Lemma \ref{lem:deletionSpanning} with an analysis of the distribution of $|W^i(v)|$. The crucial intuition here is that sets $W(e^j_v)$ are strongly correlated and hence $|W^i(v)|$ tends to be much smaller than $\sum_{j=1}^{i}|W(e^j_v)|$.  
\begin{lemma}\label{lem:technical}
$\delta^i_v\leq \delta^i:=\frac{1}{2^i}MH_i+(1-\frac{1}{2^i})\sum_{q\geq 1}\frac{1}{2^q}MH_{q+i}$.
\end{lemma}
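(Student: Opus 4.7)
The plan is to apply Lemma~\ref{lem:deletionSpanning} to the witness set $W = W^i(v) \subseteq E(T^*)$, which gives $\delta^i_v \leq M \cdot E[H_{|W^i(v)|}]$. The task then reduces to showing that $|W^i(v)|$ has (or is stochastically dominated by) the distribution $\Pr[|W^i(v)| = i] = 2^{-i}$ and $\Pr[|W^i(v)| = i+q] = (1-2^{-i}) \cdot 2^{-q}$ for $q \geq 1$. Integrating $H_{\cdot}$ against this distribution produces exactly the claimed $\delta^i$.

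I first set up the combinatorial picture. Each marked edge $(a,b)$ of $S^*_{bin}$ (with $b'$ the sibling of $b$) corresponds bijectively to the $T^*$ edge $\{R(b), R(b')\}$, where $R(x)$ denotes the unique leaf reached by descending from $x$ along unmarked edges only. Hence $|W^i(v)|$ counts the marked edges whose associated $T^*$ path in $S^*_{bin}$ crosses some $e^j_v$. By the extra assumption built into the construction of $S^*_{bin}$, the edges $e^1_v, \dots, e^i_v$ sit in the top $i$ consecutive levels of the subtree arising from the binarization of $v$. In the worst case, which I claim dominates the other, these are $i$ child edges $e^j_v = (v_{j-1}, u_j)$ attached to the chain of copies $v_0, \dots, v_{i-1}$ of $v$ (joined by $0$-cost dummy edges $(v_{j-1}, v_j)$); the alternative case where $e^1_v$ is the parent edge of $v$ yields a stochastically dominated $|W^i(v)|$, by a direct comparison exploiting the fact that the parent edge catches one extra deterministic witness at spine level $q_1$, shifting probability mass to the smallest value $i$. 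The $i$ edges split $S^*_{bin}$ into $i$ crossing regions $A_j$ (the subtree of $u_j$) and one outside region $C$ containing everything else.

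The contributions decompose into a deterministic local part and a random spine part. Locally, for each $j \in \{1,\dots,i\}$, the unique marked edge at copy $v_{j-1}$ yields the $T^*$ edge $\{R(u_j), R(v_j)\}$; since $R(u_j) \in A_j$ while $R(v_j)$ lies in some other region (traced by the unmarked descent down the chain, it is either in some $A_k$ with $k>j$ or in $C$), this edge crosses $e^j_v$. This gives $i$ contributions, always. For a marked edge at a spine node $q_k$ above $v_0$ (writing $q_1 = p$, $q_2$, etc.), the corresponding $T^*$ edge $\{R(q_{k-1}), R(q'_k)\}$ has $R(q'_k) \in C$, and it crosses some $e^j_v$ iff $R(q_{k-1}) \notin C$. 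The latter requires both (i) the spine edges $(q_1, q_0), \dots, (q_{k-1}, q_{k-2})$ to all be unmarked, so that the unmarked descent from $q_{k-1}$ reaches $v_0$, and (ii) $R(v_0) \notin C$, i.e., at least one of the markings at $v_0, \dots, v_{i-1}$ selects a $(v_{j-1}, u_j)$-edge rather than the dummy $(v_{j-1}, v_j)$.

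Event (ii) depends only on the $i$ independent markings in the chain and fails precisely when all of them choose the dummy direction, giving $\Pr[R(v_0) \in C] = 2^{-i}$. Event (i) depends only on the spine markings, which are independent from the chain markings, and the number of spine levels $k \geq 1$ satisfying it equals $N := \min\{k \geq 1 : (q_k, q_{k-1}) \text{ is marked}\}$, a geometric variable with $\Pr[N=n] = 2^{-n}$. Combining the $i$ local contributions with $N$ spine contributions when (ii) holds (and $0$ when it fails), the joint distribution of $|W^i(v)|$ is exactly the target one, completing the bound. The main obstacle is the careful case analysis: verifying the stochastic dominance claim between the two subcases, tracking the independence of chain markings from spine markings, and ensuring no double-counting between local and spine contributions --- each a tedious but routine check using the bijection with marked edges.
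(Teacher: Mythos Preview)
Your approach is essentially the same as the paper's. Both proofs apply Lemma~\ref{lem:deletionSpanning} to obtain $\delta^i_v \leq M\cdot E[H_{|W^i(v)|}]$ and then analyze the distribution of $|W^i(v)|$. The paper encodes the situation via a subtree $T'$ (the $e^j_v$ and their siblings) with $i+1$ leaves, expands it by unmarked descents to $C'$, observes at most $i$ ``internal'' witness edges, and distinguishes the event $s'=d'$ (probability $2^{-i}$, no extra witnesses) from $s'\neq d'$ (at most $|E(P')|+1$ extra witnesses, where $|E(P')|$ is the length of the unmarked upward path). Your bijection-based count is the same analysis in different clothing: your $i$ deterministic ``local'' contributions are precisely the internal witness edges of the paper, your event $\neg(\mathrm{ii})$ is exactly $s'=d'$, and your geometric spine count $N$ equals the paper's $|E(P')|+1$. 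Both treatments defer the case where the parent edge lies among $e^1_v,\ldots,e^i_v$: the paper to ``the other cases being analogous and simpler'', you to a stochastic-dominance assertion.

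One small slip worth fixing: your verbal description of event (ii) has the marking direction reversed. The unmarked descent from $v_0$ stays on the dummy chain (and hence lands in $C$) precisely when every $(v_{j-1},u_j)$ is \emph{marked}, so $R(v_0)\notin C$ holds iff at least one dummy edge is marked, not the other way around. Your probability $\Pr[R(v_0)\in C]=2^{-i}$ is nonetheless correct, so the final bound is unaffected.
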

\begin{proof}  
Let us assume that $v$ has degree $d(v)\geq 3$ and that $i<d(v)$, the other cases being analogous and simpler. Recall that we need to delete all the edges in $W^i(v)$ in order to delete $e^1_v,\ldots,e^i_v$, and this takes time $M\,H_{|W^i(v)|}$ in expectation by Lemma \ref{lem:deletionSpanning}. Let us study the distribution of $|W^i(v)|$. Consider the subtree $T'$ of $S^*_{bin}$ given by (the edges corresponding to) $e^1_v,\ldots,e^i_v$ plus their sibling (possibly dummy) edges. Observe that, by our assumption on the structure of $S^*_{bin}$, this tree has $i+1$ leaves and height $i$. We expand $T'$ by appending to each leaf $v$ of $T'$ the only path of unmarked edges from $v$ down to some leaf $\ell(v)$ of $S^*_{bin}$: let $C'$ be the resulting tree (with $i+1$ leaves). Each edge of $T^*$ with both endpoints in (the leaves of) $C'$ is a witness edge in $W^i(v)$. The number of these edges is at most $i$ since the witness tree is acyclic: assume pessimistically that they are exactly $i$. Let $r'$ be the root of $T'$, and $s'$ be the (only) leaf of $T'$ such that the edges on the path from $r'$ to $s'$ are unmarked. Let also $d'$ be the only leaf of $T'$ which is not the endpoint of any $e^j_v$, $j\leq i$ ($d'$ is defined since $i<d(v)$). Observe that $Pr[s'=d']=1/2^i$ since this event happens only if the $i$ edges along the path from $r'$ to $d'$ are unmarked. When $s'\neq d'$, there are st most $|E(P')|+1$ extra edges in $W^i(v)$, where $P'$ is a maximal path of unmarked edges starting from $r'$ and going to the root of $S^*_{bin}$. If $h_{i,v}$ is the maximum value of $|E(P')|$, then $Pr[|E(P')|=q]=1/2^{\min\{q+1,h_{i,v}\}}$ for $q\in [0,h_{i,v}]$. Altogether:
{\small
\begin{align*}
\delta^i_v & \overset{Lem. \ref{lem:deletionSpanning}}\leq \sum_{g\geq 1} Pr[|W^i(v)|=g]\cdot MH_{g} \leq \frac{MH_i}{2^i} + (1-\frac{1}{2^i})\cdot \sum_{q=0}^{h_{i,v}}\frac{MH_{i+q+1}}{2^{\min\{q+1,h_{i,v}\}}} \leq \delta^i.\hspace{18pt}\qed 
\end{align*}}
\end{proof}
The reader may check that the above $\delta^i$'s satisfy Properties (a) and (b) since 
{\small
$$
\frac{\delta^{i+1}-\delta^{i}}{M}=\frac{1}{(i+1)2^{i+1}}+(1-\frac{1}{2^{i+1}})\sum_{q\geq 1}\frac{1}{2^q(q+i+1)}+\frac{1}{2^{i+1}}(\sum_{q\geq 1}\frac{H_{q+i}}{2^q}-H_i)
$$} 
\hspace{-3pt}is a positive decreasing function of $i$. Theorem \ref{thr:mainSteiner} immediately follows from \eqref{eqn:refinedApx} and Lemma \ref{lem:technical} since  $\delta^2=\frac{H_2}{4}+3(\sum_{q\geq 1}\frac{H_q}{2^q}-\frac{H_1}{2}-\frac{H_2}{4})=3\ln 4 - \frac{9}{4}$. 


\paragraph{\bf Acknowledgments.} We thank Marek Cygan for reading a preliminary version of this paper and Zeev Nutov for brinding the min-power spanning tree problem to our attention (during a Dagstuhl workshop) and for mentioning some analogies between that problem and min-cost Steiner tree.

{\small
\bibliographystyle{plain}
\bibliography{bibminpower.bib}

\begin{thebibliography}{10}

\bibitem{ACMPTZ06}
E.~Althaus, G.~Calinescu, I.~I. Mandoiu, S.~K. Prasad, N.~Tchervenski, and
  A.~Zelikovsky.
\newblock Power efficient range assignment for symmetric connectivity in static
  ad hoc wireless networks.
\newblock {\em Wireless Networks}, 12(3):287--299, 2006.

\bibitem{BD97}
A.~Borchers and D.-Z. Du.
\newblock The k-{S}teiner ratio in graphs.
\newblock {\em SIAM Journal on Computing}, 26(3):857--869, 1997.

\bibitem{BGRS10}
J.~Byrka, F.~Grandoni, T.~Rothvo{\ss}, and L.~Sanit{\`a}.
\newblock An improved {LP}-based approximation for {S}teiner tree.
\newblock In {\em STOC}, pages 583--592, 2010.

\bibitem{C10}
G.~Calinescu.
\newblock Min-power strong connectivity.
\newblock In {\em APPROX-RANDOM}, pages 67--80, 2010.

\bibitem{CMZ02}
G.~Calinescu, I.~I. Mandoiu, and A.~Zelikovsky.
\newblock Symmetric connectivity with minimum power consumption in radio
  networks.
\newblock In {\em IFIP TCS}, pages 119--130, 2002.

\bibitem{CK07}
P.~Carmi and M.~J. Katz.
\newblock Power assignment in radio networks with two power levels.
\newblock {\em Algorithmica}, 47(2):183--201, 2007.

\bibitem{CCPRV01}
A.~E.~F. Clementi, P.~Crescenzi, P.~Penna, G.~Rossi, and P.~Vocca.
\newblock On the complexity of computing minimum energy consumption broadcast
  subgraphs.
\newblock In {\em STACS}, pages 121--131, 2001.

\bibitem{CPS00}
A.~E.~F. Clementi, P.~Penna, and R.~Silvestri.
\newblock On the power assignment problem in radio networks.
\newblock In {\em ECCC}, 2000.

\bibitem{HIM07}
M.~T. Hajiaghayi, N.~Immorlica, and V.~S. Mirrokni.
\newblock Power optimization in fault-tolerant topology control algorithms for
  wireless multi-hop networks.
\newblock {\em Transactions on Networking}, 15(6):1345--1358, 2007.

\bibitem{HKMN07}
M.~T. Hajiaghayi, G.~Kortsarz, V.~S. Mirrokni, and Z.~Nutov.
\newblock Power optimization for connectivity problems.
\newblock {\em Mathematical Programming}, 110(1):195--208, 2007.

\bibitem{KKKP00}
L.~M. Kirousis, E.~Kranakis, D.~Krizanc, and A.~Pelc.
\newblock Power consumption in packet radio networks.
\newblock {\em Theoretical Computer Science}, 243:289--305, 2000.

\bibitem{NY09}
Z.~Nutov and A.~Yaroshevitch.
\newblock Wireless network design via 3-decompositions.
\newblock {\em Information Processing Letters}, 109(19):1136--1140, 2009.

\bibitem{PS00}
H.~J: Pr{\"o}mel and A.~Steger.
\newblock A new approximation algorithm for the {S}teiner tree problem with
  performance ratio $5/3$.
\newblock {\em Journal of Algorithms}, 36(1):89--101, 2000.

\bibitem{WCLF01}
P.-J. Wan, G.~Calinescu, X.-Y. Li, and O.~Frieder.
\newblock Minimum energy broadcast routing in static ad hoc wireless networks.
\newblock In {\em INFOCOM}, pages 1162--1171, 2001.

\end{thebibliography}
}

\newpage

\section*{Appendix}

\begin{Figure}[t] 
\begin{center}
\psset{xunit=0.52cm, yunit=0.9cm, labelsep=2pt}
\begin{pspicture}(9,-0.5)(9,3.0)
\psset{labelsep=0, linewidth=1.0pt, framesize=6pt}
\psset{labelsep=0pt}
\fnode(-1.0,0){a}\rput[lc](-1.0,-0.3){a} 
\fnode(0.0,0){b}\rput[lc](0.0,-0.3){b} 
\fnode(2.0,0){c}\rput[lc](2.1,-0.3){c} 
\cnode(0.0,1){3pt}{s4}
\cnode(1.0,1){3pt}{s2}
\fnode(2.2,1){d}\rput[lc](2.2,0.7){d} 
\cnode(1.0,2){3pt}{s1}\rput[lc](0.6,2.0){v}
\fnode(0.0,2){e}\rput[lc](-0.5,2){e} 
\fnode(1.0,3.0){f}\rput[lc](1.5,3.0){f} 
\cnode(1.0,2.5){3pt}{r}
\ncline{s4}{a}\nbput{$6$}
\ncline[linewidth=2pt]{s4}{b}\rput(0.4,0.3){$3$}
\ncline{s2}{c}\nbput{$8$}
\ncline[linewidth=2pt]{s1}{s4}\nbput{$1$}
\ncline{s1}{s2}\naput{$2$}
\ncline{s1}{d}\naput{$4$}
\ncline{r}{s1}
\ncline{r}{e}
\ncline{r}{f}
\psellipse[linecolor=gray](0.5,1)(1,0.3){}
\psellipse[linecolor=gray](2.2,1)(0.5,0.3){}
\end{pspicture}
\begin{pspicture}(2,-0.5)(2,3.0)
\psset{labelsep=0, linewidth=1.0pt, framesize=6pt}
\psset{labelsep=0pt}
\fnode(0.0,0){a}\rput[lc](0.0,-0.3){a} 
\fnode(1.0,0){b}\rput[lc](1.0,-0.3){b} 
\fnode(2.0,0){c}\rput[lc](2.0,-0.3){c} 
\cnode(0.0,1){3pt}{s4}
\cnode(2.0,1){3pt}{s2}
\cnode(1.0,2){3pt}{s1}\rput[lc](0.6,2.0){v}
\ncline{s4}{a}
\ncline{s4}{b}
\ncline{s2}{c}
\ncline{s1}{s4}
\ncline{s1}{s2}
\psellipse[linestyle=dashed](2,0.4)(0.5,1.0){}
\end{pspicture}
\begin{pspicture}(-2.0,-0.5)(-2.0,3.5)
\psset{labelsep=0, linewidth=1.0pt, framesize=6pt}
\psset{labelsep=0pt}
\fnode(0.0,0){b}\rput[lc](0.0,-0.3){b} 
\cnode(0.0,1){3pt}{s4}
\fnode(2.0,1){dbis}\rput[lc](2.0,0.7){d} 
\cnode(1.0,2){3pt}{s1bis}\rput[lc](0.6,2.0){v}
\fnode(0.0,2){e}\rput[lc](-0.5,2){e} 
\fnode(1.0,3.0){f}\rput[lc](1.5,3.0){f} 
\cnode(1.0,2.5){3pt}{r}
\ncline{s4}{b}
\ncline{s1bis}{s4}
\ncline{s1bis}{dbis}
\ncline{r}{s1bis}
\ncline{r}{e}
\ncline{r}{f}
\psellipse[linestyle=dashed](0,0.4)(0.5,1.0){}
\psline[linestyle=dashed]{->}(-0.5,0.5)(-1.8,0.5)\rput(-1.1,0.7){$6$}
\ncline[linecolor=gray,linestyle=dashed]{->}{s1}{dbis}\naput{2}
\end{pspicture}
\begin{pspicture}(-8.0,-0.2)(-8.0,3.3)
\psset{labelsep=0, linewidth=1.0pt, framesize=6pt}
\psset{labelsep=0pt}
\fnode(0.0,0){a}\rput[lc](0.0,-0.3){a} 
\fnode(2.0,0){c}\rput[lc](2.0,-0.3){c} 
\cnode(1,0.75){3pt}{sabc}
\fnode(1.0,1.5){b}\rput[lc](1.5,1.6){b} 
\cnode(1,2.25){3pt}{sbdef}
\fnode(1.0,3){f}\rput[lc](1.0,3.3){f} 
\fnode(0.0,2.25){e}\rput[lc](0.0,2.6){e} 
\fnode(2.0,2.25){d}\rput[lc](2.0,2.6){d} 
\ncline{sabc}{a}
\ncline{sabc}{b}
\ncline{sabc}{c}
\ncline{sbdef}{b}
\ncline{sbdef}{d}
\ncline{sbdef}{e}
\ncline{sbdef}{f}
\end{pspicture}
\caption{Decomposition of a tree into components of maximum degree $\Delta=3$ as in Lemma \ref{lem:boundedDegree}. {\bf (Left)} Part of the edges are labelled with their weight. Squares denote terminals. The chosen root is $f$. The only split node is $v$. The corresponding sets of children $V_1$ and $V_2$, for $\Delta'=2$, are indicated by the gray ovals. Bold edges denote the path $P_{m(1)}=P_1$ associated with $V_1$. {\bf (Middle)} The resulting two components. The gray dashed arrow illustrates the charging for the new copy of node $v$, and the black dashed arrow the charging of the nodes in $P_{1}-\{v\}$ (dashed oval on the right) to the nodes in $P_2-\{v\}$ (dashed oval on the left). {\bf (Right)} The corresponding component graph.}
\label{fig:lemDecomposition} 
\end{center}
\end{Figure}

\begin{Figure}[t] 
\begin{center}
\psset{xunit=0.52cm, yunit=0.9cm, labelsep=2pt}
\begin{pspicture}(10.5,-0.2)(10.5,3.0)
\psset{labelsep=0, linewidth=1.0pt, framesize=6pt}
\psset{labelsep=0pt}
\cnode(2.5,2.5){3pt}{r}\rput[lc](2.5,2.8){r}
\cnode[fillstyle=solid,fillcolor=black](1.5,2){3pt}{s1}
\cnode[fillstyle=solid,fillcolor=black](3.5,2){3pt}{s2} \rput[lc](3.6,2.3){v}
\cnode(1.0,1.5){3pt}{s3}
\fnode(2.0,1.5){a}\rput[lc](2.0,1.25){a}
\fnode(3.0,1.5){b}\rput[lc](3.0,1.25){b}
\cnode(4.0,1.5){3pt}{s6}\rput[lc](4.3,1.7){u}
\cnode(0.5,1.0){3pt}{s7}
\fnode(1.5,1.0){c}\rput[lc](1.5,0.75){c}
\cnode(3.5,1.0){3pt}{s9}
\fnode(4.5,1.0){d}\rput[lc](4.5,0.75){d}
\fnode[fillstyle=solid,fillcolor=black](0.0,0.5){e}\rput[lc](0,0.25){e}
\fnode[fillstyle=solid,fillcolor=black](1.0,0.5){f}\rput[lc](1,0.25){f}
\cnode[fillstyle=solid,fillcolor=black](3.0,0.5){3pt}{s13}
\fnode[fillstyle=solid,fillcolor=black](4.0,0.5){g}\rput[lc](4.0,0.25){g}
\fnode(2.5,0.0){h}\rput[lc](2.5,-0.25){h}
\fnode(3.5,0.0){i}\rput[lc](3.5,-0.25){i}
\ncline{r}{s1}
\ncline{r}{s2}
\ncline{s1}{s3}
\ncline{s1}{a}
\ncline{s2}{b}
\ncline[linewidth=2.5pt]{s2}{s6}
\ncline{s3}{s7}
\ncline{s3}{c}
\ncline[linewidth=2.5pt]{s6}{s9}
\ncline{s6}{d}
\ncline{s7}{e}
\ncline{s7}{f}
\ncline[linewidth=2.5pt]{s9}{s13}
\ncline{s9}{g}
\ncline[linewidth=2.5pt]{s13}{h}
\ncline{s13}{i}
\psline[linestyle=dashed](-0.5,2)(5.0,2)
\psline[linestyle=dashed](-0.5,0.5)(5.0,0.5)
\end{pspicture}
\begin{pspicture}(3.7,-0.2)(3.7,3.0)
\psset{labelsep=0, linewidth=1.0pt, framesize=6pt}
\psset{labelsep=0pt}
\cnode(1.75,2.5){3pt}{r}
\cnode(1.0,2){3pt}{s1}
\cnode(2.5,2){3pt}{s2}
\fnode(1.5,1.5){a}\rput[lc](1.5,1.25){a}
\cnode(3.0,1.5){3pt}{s6}\rput[lc](3.3,1.7){u}
\cnode(2.5,1.0){3pt}{s9}
\cnode(2.0,0.5){3pt}{s13}
\fnode(1.5,0.0){h}\rput[lc](1.5,-0.25){h}
\ncline{r}{s1}
\ncline{r}{s2}
\ncline{s1}{a}
\ncline{s2}{s6}
\ncline{s6}{s9}
\ncline{s9}{s13}
\ncline{s13}{h}
\end{pspicture}
\begin{pspicture}(0,-0.2)(0,2.5)
\psset{labelsep=0, linewidth=1.0pt, framesize=6pt}
\psset{labelsep=0pt}
\cnode(1.5,2.5){3pt}{s1}
\cnode(1.0,2.0){3pt}{s3}
\fnode(2.0,2.0){a}\rput[lc](2.0,1.75){a}
\cnode(0.5,1.5){3pt}{s7}
\fnode(1.5,1.5){c}\rput[lc](1.5,1.25){c}
\fnode(0.0,1.0){e}\rput[lc](0,0.75){e}
\fnode(1.0,1.0){f}\rput[lc](1,0.75){f}
\ncline{s1}{s3}
\ncline{s1}{a}
\ncline{s3}{s7}
\ncline{s3}{c}
\ncline{s7}{e}
\ncline{s7}{f}
\cnode(2.0,0.5){3pt}{s13}
\fnode(1.5,0.0){h}\rput[lc](1.5,-0.25){h}
\fnode(2.5,0.0){i}\rput[lc](2.5,-0.25){i}
\ncline{s13}{h}
\ncline{s13}{i}
\end{pspicture}
\begin{pspicture}(-0.1,-0.2)(-0.1,3.0)
\psset{labelsep=0, linewidth=1.0pt, framesize=6pt}
\psset{labelsep=0pt}
\cnode(3.5,2){3pt}{s2} 
\fnode(3.0,1.5){b}\rput[lc](3.0,1.25){b}
\cnode(4.0,1.5){3pt}{s6}\rput[lc](4.3,1.7){u}
\cnode(3.5,1.0){3pt}{s9}
\fnode(4.5,1.0){d}\rput[lc](4.5,0.75){d}
\cnode(3.0,0.5){3pt}{s13}
\fnode(4.0,0.5){g}\rput[lc](4.0,0.25){g}
\fnode(3.5,0.0){i}\rput[lc](3.5,-0.25){i}
\ncline{s2}{b}
\ncline{s2}{s6}
\ncline{s6}{s9}
\ncline{s6}{d}
\ncline{s9}{s13}
\ncline{s9}{g}
\ncline{s13}{i}
\end{pspicture}
\begin{pspicture}(-10.5,-0.3)(-10.5,3.0)
\psset{labelsep=0, linewidth=1.0pt, framesize=6pt}
\psset{labelsep=0pt}
\fnode(0,1){h}\rput[lc](0.5,1){h}
\cnode(0,1.5){3pt}{sha}
\fnode(0,2){a}\rput[lc](0.5,2){i}
\cnode(0,0.5){3pt}{shi}
\fnode(0,0){i}\rput[lc](0.5,0){a}
\cnode(-1,0){3pt}{scefi}
\fnode(-2,0){e}\rput[lc](-2.5,0){e}
\fnode(-1,0.5){f}\rput[lc](-1.5,0.5){f}
\fnode(-1,-0.5){c}\rput[lc](-1.5,-0.5){c}
\cnode(-1,2){3pt}{sbdga}
\fnode(-2,2){d}\rput[lc](-2.5,2){d}
\fnode(-1,2.5){b}\rput[lc](-1.5,2.5){b}
\fnode(-1,1.5){g}\rput[lc](-1.5,1.5){g}
\ncline{sha}{a}
\ncline{sha}{h}
\ncline{shi}{h}
\ncline{shi}{i}
\ncline{scefi}{i}
\ncline{scefi}{c}
\ncline{scefi}{e}
\ncline{scefi}{f}
\ncline{sbdga}{b}
\ncline{sbdga}{d}
\ncline{sbdga}{g}
\ncline{sbdga}{a}
\end{pspicture}
\caption{Decomposition of a component as in Theorem \ref{thr:decomposition}. {\bf (Left)} A component $C$  after contracting nodes of degree $2$ other than the root $r$. Squares denote terminals and black nodes are marked in the case $q=1$. Dashed lines suggest the partition of $C$ into edge-disjoint subtrees. Bold edges indicate the path $P(v)$. {\bf (Middle)} The 
resulting set of components $C_T$: regular edges indicate the  subtree $T$ associated to $C_T$, and bold edges the paths $P(w)$ associated to the leaves $w$ of $T$. There are two components containing $u$: the left one because the left child of $u$ is along the path $P(v)$ of marked node $v$, and the right one because $u$ belongs to the subtree $T$ of $v$. {\bf (Right)} The corresponding component graph.}
\label{fig:thrDecomposition} 
\end{center}
\end{Figure}

\begin{Figure}[t] 
\begin{center}
\psset{xunit=0.52cm, yunit=0.9cm, labelsep=2pt}
\begin{pspicture}(10.0,-1.5)(10.0,3.0)
\psset{labelsep=0, linewidth=1.0pt, framesize=6pt}
\psset{labelsep=0pt}
\cnode(1.0,2){3pt}{s1}\rput[lc](1.0,2.3){v}
\cnode(3,2){3pt}{s2}
\cnode(3,1){3pt}{s3}
\cnode(0,1){3pt}{sl} 
\fnode(-1,0){a}\rput[lc](-1,-0.3){a} 
\fnode(1,0){b}\rput[lc](1,-0.3){b} 
\fnode(1.0,1){c}\rput[lc](1.0,0.7){c} 
\fnode(2,1){d}\rput[lc](2,0.7){d} 
\fnode(3,0){e}\rput[lc](3,-0.3){e} 
\fnode(4.0,1){f}\rput[lc](4.0,0.7){f} 
\ncline{s1}{sl} \nbput{$1$}
\ncline{s1}{c} \nbput{$5$}
\ncline{s1}{d} \naput{$8$}
\ncline{s1}{s2} \naput{$2$}
\ncline{sl}{a} \nbput{$9$}
\ncline{sl}{b} \nbput{$7$}
\ncline{s2}{s3} \nbput{$3$}
\ncline{s2}{f} \naput{$6$}
\ncline{s3}{e} \nbput{$4$}
\end{pspicture}
\begin{pspicture}(3,-1.5)(3,3.0)
\psset{labelsep=0, linewidth=1.0pt, framesize=6pt}
\psset{labelsep=0pt}
\cnode(0,0){3pt}{s5}
\fnode(-0.5,-1){a}\rput[lc](-0.5,-1.3){a} 
\fnode(0.5,-1){b}\rput[lc](0.5,-1.3){b} 
\fnode(1.5,0){c}\rput[lc](1.6,-0.3){c} 
\fnode(2.5,1){d}\rput[lc](2.6,0.7){d} 
\cnode(3.0,2.5){3pt}{r}\rput[lc](3.0,2.8){r}
\cnode(1.5,2){3pt}{s1}
\cnode(4.5,2){3pt}{s2}
\cnode(0.75,1){3pt}{s4}
\fnode(3.5,1){e}\rput[lc](3.5,0.7){e} 
\fnode(5.5,1){f}\rput[lc](5.6,0.7){f} 
\ncline[linewidth=2pt]{s5}{a} \nbput{$9$}
\ncline{s5}{b} \naput{$7$}
\ncline[linewidth=2pt]{s4}{s5} \nbput{$1$}
\ncline{c}{s4} \nbput{$5$}
\ncline{s1}{s4} \nbput{$0$}
\ncline[linewidth=2pt]{d}{s1} \nbput{$8$}
\ncline[linewidth=2pt]{r}{s1} \nbput{$2$}
\ncline{r}{s2} \naput{$0$}
\ncline[linewidth=2pt]{s2}{e} \nbput{$7$}
\ncline{s2}{f} \naput{$6$}
\nccurve[linecolor=gray, angleA=-45, angleB=-135]{a}{b}
\nccurve[linecolor=gray, angleA=-45, angleB=-100]{b}{c}
\nccurve[linecolor=gray, angleA=-60, angleB=-100]{c}{d}
\nccurve[linecolor=gray, angleA=-55, angleB=-135]{c}{f}
\nccurve[linecolor=gray, angleA=-45, angleB=-135]{e}{f}
\end{pspicture}
\begin{pspicture}(-6,-1.5)(-6,3.0)
\psset{labelsep=0, linewidth=1.0pt, framesize=6pt}
\psset{labelsep=0pt}
\cnode(0,0){3pt}{s5}\rput[lc](-0.2,0.3){d'}
\fnode(0.5,-1){b}
\fnode(1.5,0){c}\rput[lc](1.7,0.3){s'} 
\fnode(2.5,1){d}
\cnode(1.5,2){3pt}{s1}\rput[lc](1.4,2.3){r'}
\cnode(0.75,1){3pt}{s4}
\fnode(5.5,1){f}
\ncline[linewidth=2pt]{s5}{b} 
\ncline{s4}{s5} 
\ncline{c}{s4} 
\ncline{s1}{s4} 
\ncline{d}{s1} 
\nccurve[linecolor=gray, angleA=-45, angleB=-100]{b}{c}
\nccurve[linecolor=gray, angleA=-60, angleB=-100]{c}{d}
\nccurve[linecolor=gray, angleA=-55, angleB=-135]{c}{f}
\end{pspicture}
\caption{{\bf (Left)} A Steiner tree $S^*$. Squares denote terminals. Edges are labelled with their costs. {\bf (Middle)} The corresponding binary tree $S^*_{bin}$.  Bold edges are marked. Gray edges define the witness tree $T^*$. The witness sets for the edges of cost $8$ and $5$ are $\{cd\}$ and $\{bc,cd,cf\}$, respectively. Note that these sets have a non-empty intersection. {\bf (Right)} Black edges denote the subtree $C'$ associated with 
the two most expensive edges incident to $v$, of weight $8$ and $5$: regular edges denote $T'$ and bold edges the paths of unmarked edges from the leaves of $T'$ to terminals. The picture also shows the nodes $r'$, $s'$, and $d'$ of $T'$. The corresponding witness set is $W^2(v)=\{bc,cd,cf\}$. Edges $bc,cd\in W^2(v)$ have both endpoints among the leaves of $C'$. In the example $d'\neq s'$ and $P'$ has length $0$ (since the edge from $r'$ to its parent is marked): this corresponds to one extra edge $cf\in W^2(v)$.}  
\label{fig:witnesstree} 
\end{center}
\end{Figure}

\begin{proof} {\em (Lemma \ref{lem:polyLP})}
Consider the optimal fractional solution $x^*$. We define a feasible fractional solution $x'$ where $x'_{Q,s}=0$ for $|Q|>k$. Initially $x'=0$. For any $x^*_{Q,s}$, apply the Decomposition Theorem \ref{thr:decomposition} to $Q$, 
hence obtaining a collection of $k$-components $C_1,\ldots,C_h$. 
Direct the edges in the component graph towards $s$, so as to identify a sink node $s_i$ for each $C_i$. For each $i$, increase $x'_{R\cap V(C_i),s_i}$ by $x^*_{Q,s}$. For a constant $k$ large enough, $x'$ costs at most $1+\eps$ times more than $x^*$. 
Consequently, in order to compute a $1+\eps$ approximate solution, it is sufficient to consider the pairs $(Q,s)$ with $|Q|\leq k$, which are polynomially many. The number of constraints remains exponential, however the separation problem can be solved in polynomial time by the same reduction to MinCut as in \cite{BGRS10}.\qed    
\end{proof}

\begin{proof} {\em (Lemma \ref{lem:time})}
Each iteration takes polynomial time. At any given iteration $t$, if there is no Steiner tree of zero-cost edges, there exists some terminal $r'\neq r$ such that $\sum_{(Q,s): s\in Q \subseteq R,s\neq r',r'\in Q}x^t_{Q,s}\geq 1$ and $p_{Q}>0$ for all the considered $Q$. Since w.l.o.g. $x^t_{Q,s}\leq 1$ and hence $1\leq \sum_{(Q,s)}x^t_{Q,s} \leq n^{O(1)}$, with probability at least $1/n^{O(1)}$ in the current iteration we set to zero the cost of some edge. The claim follows.\qed
\end{proof}

\end{document}